\documentclass[journal]{IEEEtran}

\usepackage{graphicx}

\usepackage{url}
\usepackage{float}
\usepackage{multirow,booktabs}
\usepackage{color}
\usepackage[dvipsnames]{xcolor}
\usepackage{cite}
\usepackage{algorithm}
\usepackage{algpseudocode}
\algrenewcommand\algorithmicrequire{\textbf{Input:}}
\algrenewcommand\algorithmicensure{\textbf{Output:}}
\usepackage{amsmath,amssymb,amsfonts}
\usepackage{bm}
\usepackage[utf8]{inputenc}
\usepackage[english]{babel}
\usepackage{amsthm}

\usepackage{hyperref}
\hypersetup{colorlinks=true,linkcolor=blue,citecolor=blue,urlcolor=blue}

\usepackage{caption}
\usepackage{subcaption}
\theoremstyle{plain}
\newtheorem{theorem}{Theorem}
\newtheorem{lemma}{Lemma}

\theoremstyle{definition}

\theoremstyle{remark}
\newtheorem{remark}{Remark}

\DeclareMathOperator{\sign}{sign}
\DeclareMathOperator{\diag}{diag}
\DeclareMathOperator{\tr}{tr}
\DeclareMathOperator{\Real}{Re}

\DeclareMathOperator{\prox}{prox}

\begin{document}

%

\title{DOA Estimation from One-Bit Magnitude-only Measurements via Sign-Consistency Optimization}

\author{
\IEEEauthorblockN{
Xicheng Lu, 
 Wei Liu, \textit{Senior Member, IEEE} and Akram Alomainy, \textit{Senior Member, IEEE}
}
}

\maketitle


\begin{abstract}
The direction-of-arrival (DOA) estimation problem using one-bit quantized magnitude-only measurements is studied, where magnitude-only measurements offer the advantage of robustness against phase errors, thereby avoiding the need for array calibration, while the use of one-bit quantization significantly reduces hardware cost and system complexity. As their direct combination results in meaningless constant measurements, we have formulated  a sign-consistency optimization problem using a smooth logistic surrogate with $\ell_{2,1}$-norm regularization to promote joint sparsity. To solve this, a proximal-gradient algorithm is developed with guaranteed convergence to a critical point. Numerical results demonstrate that the proposed method achieves accuracy comparable to coherent one-bit baselines under ideal conditions, while maintaining robust performance under severe phase errors that substantially impair coherent methods.
\end{abstract}

\let\thefootnote\relax\footnote{This work has been submitted to the IEEE for possible publication. Copyright may be transferred without notice, after which this version may no longer be accessible.

X. Lu and A. Alomainy are with the School of Electronic Engineering and Computer Science, Queen Mary University of London, London, UK (e-mail:xicheng.lu@qmul.ac.uk, a.alomainy@qmul.ac.uk).

W. Liu is with the Department of Electrical and Electronic Engineering, The Hong Kong Polytechnic University, Hong Kong SAR (email: wei2.liu@polyu.edu.hk).}
%

\begin{IEEEkeywords}
One-bit quantization, DOA estimation, magnitude-only measurements, phase retrieval, non-coherent, group sparsity
\end{IEEEkeywords}

%

\section{Introduction}

Direction-of-arrival (DOA) estimation is a fundamental problem in radar, sonar,
and wireless communications~\cite{vanTrees2002optimum,krim1996parametric,liu2023review}.
Traditional high-resolution algorithms such as subspace-based methods~\cite{schmidt1986music,roy1989esprit},
and sparsity-based formulations~\cite{malioutov2005sparse,stoica2010spice,kim2012compressive,zheng2013sparse,shen2016underdetermined}, assume access to complex-valued snapshots with phase information preserved and a well-calibrated array.

In practice, sensor gain mismatches and phase errors will inevitably degrade their performance, and many robust DOA estimation methods have been developed in this direction, such as those based on eigenstructure~\cite{weiss1990eigenstructure,liu2011eigenstructure}, sparse Bayesian learning (SBL)~\cite{zhao2013improved,liu2013unified}, and atomic norm minimization~\cite{chen2020new}. However, they still rely on coherent complex measurements and attempt to mitigate phase errors through enhanced modeling rather than eliminating the dependence on phase information.

A more recent approach is to estimate DOAs directly from magnitude/intensity-only measurements, where phase information is inherently unavailable. This non-coherent setting is closely related to the phase retrieval problem, where one aims to recover a signal from the magnitudes of linear measurements~\cite{candes2015phase,shechtman2015phase}. Early methods considered sparse nonlinear least-squares formulations solved via local search or greedy procedures~\cite{kim2014noncoherent}, as well as harmonic spectral estimation from magnitude-squared data~\cite{zayyani2016noncoherent}.

However, the lack of phase information may introduce intrinsic ambiguities in magnitude-only DOA estimation, including the mirroring, spatial shift, and spatial order ambiguities~\cite{kim2014noncoherent,wan2021dual}. To remove these ambiguities, one or more reference sources with known DOA can be employed~\cite{kim2014noncoherent,zayyani2016noncoherent}. Other approaches square the sample covariance matrix over multiple snapshots to obtain a linearized sparse recovery model~\cite{tian2019noncoherent}, or incorporate group-sparsity regularization within majorization-minimization to enhance robustness~\cite{wan2021dual}. Alternatively, dual uniform linear array (ULA) geometries can resolve all three ambiguities via inter-subarray magnitude discrepancies~\cite{wan2021dual}, while uniform circular arrays (UCAs) can avoid the mirroring ambiguity due to the lack of conjugate symmetry in their steering vectors~\cite{wan2021uca,wan2023target}. Despite these advances, existing magnitude-only DOA methods are normally developed under the assumption of full-precision measurements.

In parallel, low-resolution analog-to-digital converters (ADCs), including one-bit, 1.5-bit, and two-bit quantizers, have attracted growing interest due to their reduced power consumption, hardware complexity, and data storage requirements~\cite{lu2025tsp15bit}. Among these, one-bit quantization has been most extensively studied. In subspace-based methods, Bussgang's theorem~\cite{bussgang1952crosscorr} and the Van~Vleck arcsine law~\cite{vanvleck1966clipped} establish a closed-form relationship between the correlation of sign-quantized Gaussian data and that of the unquantized one. Built on this, the arcsine law was applied to sparse arrays for one-bit DOA estimation in~\cite{liu2017one}, and one-bit MUSIC was proposed by exploiting this relationship to enable subspace-based DOA estimation from one-bit measurements~\cite{huang2019one}. Subsequent work generalized covariance recovery to non-zero thresholds~\cite{liu2021onebit}, estimators for time-varying thresholds under both non-stationary and stationary signal models were proposed in~\cite{eamaz2022nonstationary,eamaz2023stationary}, and a unified algorithm with analytical performance guarantees was presented in~\cite{xiao2023onebit}. From a compressed-sensing perspective, one-bit DOA estimation was formulated as a sparse recovery problem with sign-consistency constraints in~\cite{stockle2015spawc}, where the complex-valued binary iterative hard thresholding (CBIHT-$\ell_2$) algorithm was proposed. Joint sparse representation was exploited for improved signal reconstruction in~\cite{yu2016splet}. More recently, the Robust One-bit Compressed Sensing (ROCS) framework~\cite{li2024rocs} employed a $\tanh$ surrogate to approximate the discontinuous sign function, enabling robust gradient-based recovery under model mismatches.

On the other hand, one-bit phase retrieval has received considerable attention from the mathematical perspective. Early work established that sparse signals can be recovered from one-bit magnitude measurements using greedy refinement procedures~\cite{mroueh2013quantization}. Subsequent algorithmic developments include approximate message passing tailored to quantized phase retrieval~\cite{zhu2019phase} and variants of Wirtinger Flow adapted for binary observations~\cite{kishore2020wirtinger}. The interplay between sample complexity and computational cost was analyzed in~\cite{eamaz2022one}, showing that more measurements can reduce the iterations required for convergence. More recently, minimax-optimal recovery rates and computationally efficient algorithms have been established in~\cite{chen2024one}.

Despite this progress, a clear gap remains between these two research directions. Existing magnitude-only DOA methods~\cite{kim2014noncoherent,wan2021dual,wan2021uca} are based on full-precision measurements and cannot exploit the hardware benefits of low-resolution ADCs, while prevalent one-bit DOA estimation schemes~\cite{stockle2015spawc,huang2019one,li2024rocs} rely on coherent processing that preserves phase information and thus degrade severely under sensor phase errors. In particular, coherent one-bit methods suffer from the same performance degradation under phase errors as full-precision methods. Conversely, full-precision magnitude-only methods discard phase information entirely for robustness, but do not address the practical need for low-cost, low-power consumption. To the best of our knowledge, no existing work jointly addresses one-bit quantization and magnitude-only processing for DOA estimation. Motivated by the robustness of magnitude-only measurements to phase errors and the hardware efficiency of one-bit quantization, we study one-bit magnitude-only DOA estimation in this work, bridging the gap between hardware-efficient sampling and phase-error-robust non-coherent processing. 

The methods on one-bit phase retrieval introduced earlier above address a closely related problem of recovering signals from one-bit quantized magnitudes of linear projections, and the non-zero threshold requirement identified therein (since $\mathrm{sign}(|z|) = 1$ for all $z \neq 0$ under zero thresholds) applies equally to our present DOA estimation problem;  however, these methods assume random measurement matrices, and their theoretical guarantees do not directly extend to the deterministic structured steering matrix encountered in array signal processing, nor do they exploit the property of multi-snapshot joint sparsity that is central to DOA estimation.  Based on this insight, we cast the studied problem as a structured instance of one-bit phase retrieval that leverages array geometry and joint sparsity across snapshots, and develop a sign-consistency formulation for one-bit magnitude-only DOA estimation with UCAs. A smoothed proximal-gradient algorithm, termed \textbf{O}ne-\textbf{BI}t \textbf{M}agnitude-\textbf{O}nly \textbf{D}OA \textbf{E}stimation via \textbf{S}ign-consistency and group sparsi\textbf{T}y (\textbf{OBI-MODEST}), is proposed to solve the resulting nonconvex group-sparse problem. Under ideal conditions without sensor phase errors, OBI-MODEST achieves a DOA estimation accuracy comparable to representative coherent one-bit sparse recovery solutions (e.g., CBIHT-$\ell_2$~\cite{stockle2015spawc} and ROCS~\cite{li2024rocs}), despite using only one-bit magnitude information. More importantly, under sensor phase errors, the proposed scheme and algorithm remain robust, offering a favorable trade-off between reliability and hardware complexity.

The main contributions are summarized as follows.
\begin{itemize}
    \item \textbf{Problem formulation and motivation:}
    We formulate DOA estimation from one-bit quantized magnitude-only measurements with non-zero thresholds. Since taking magnitudes removes phase information, this signal model is inherently immune to per-sensor phase errors. We first show that conventional magnitude-fitting objectives are ill-suited to one-bit data because only binary threshold comparisons are available, and then establish a local constancy property showing that small perturbations in the signal matrix do not change the one-bit output, which causes zero gradients and motivates a continuous surrogate.
    
    \item \textbf{Proposed algorithm:}
    We propose a sign-consistency formulation that combines a logistic surrogate loss~\cite{plan2013one,boufounos2008onebit} with a smoothed magnitude operator to ensure differentiability, together with $\ell_{2,1}$-norm regularization to exploit joint sparsity across snapshots. We develop the OBI-MODEST algorithm based on proximal-gradient iterations and derive a closed-form shrinkage update that enables efficient computation.
    
    \item \textbf{Convergence analysis and simulation validation:}
    We establish Lipschitz continuity of the gradient and prove convergence to a critical point. Simulations on uniform circular arrays show that OBI-MODEST achieves accuracy comparable to coherent one-bit baselines under ideal conditions. Under sensor phase errors, coherent methods degrade by more than an order of magnitude while OBI-MODEST remains stable, demonstrating the robustness advantage of magnitude-only processing.
\end{itemize}

The remainder of this paper is organized as follows. 
The one-bit magnitude-only DOA estimation problem is formulated in Section~\ref{sec:problem_formulation}, and the developed OBI-MODEST algorithm with its convergence properties are presented in Section~\ref{sec:proposed_method}. 
Section~\ref{sec:simulation} provides simulation results, and 
Section~\ref{sec:conclusion} concludes the paper. A list of notations is provided in Table~\ref{tab:notation}.

\begin{table}[t]
    \centering
    \caption{List of Notations}
    \label{tab:notation}
    \begin{tabular}{l|l}
        \toprule
        \textbf{Symbol} & \textbf{Description} \\
        \midrule
        $\mathbf{A},\,\mathbf{a},\,a$ & Matrices, column vectors, and scalars \\
        $[\mathbf{A}]_{i,:}$ & $i$-th row of matrix $\mathbf{A}$ \\
        $[\mathbf{A}]_{:,j}$ & $j$-th column of matrix $\mathbf{A}$ \\
        $[\mathbf{A}]_{i,j}$ & $(i,j)$-th entry of matrix $\mathbf{A}$ \\
        $[\mathbf{a}]_{i}$ & $i$-th entry of vector $\mathbf{a}$ \\
        $(\cdot)^{\mathsf T}$ & Transpose of a real matrix/vector \\
        $(\cdot)^{\mathsf H}$ & Hermitian (conjugate) transpose \\
        $(\cdot)^\ast$ & Complex conjugate \\
        $\mathbb{R}^{M\times N}$ & Real-valued $M\times N$ matrices \\
        $\mathbb{C}^{M\times N}$ & Complex-valued $M\times N$ matrices \\
        $\mathbf{0},\,\mathbf{I}_n$ & All-zero matrix; $n\times n$ identity matrix \\
        $\mathbf{1}_n$ & All-one column vector of length $n$ \\
        $|z|$ & Magnitude of a complex scalar $z$ \\
        $|\mathbf{Z}|$ &  magnitude of a complex matrix \\
        $|z|_\epsilon$ & Smoothed magnitude: $\sqrt{|z|^2+\epsilon^2}$ \\
        $\odot$ & Hadamard product \\
        $\langle\mathbf{A},\mathbf{B}\rangle$ & Frobenius inner product: $\Real\,\tr(\mathbf{A}^{\mathsf H}\mathbf{B})$ \\
        $\|\cdot\|_2$ & Euclidean norm \\
        $\|\cdot\|_F$ & Frobenius norm of a matrix \\
        $\|\cdot\|_\infty$ &  $\|\mathbf{Z}\|_\infty \!=\! \max_{i,j}|[\mathbf{Z}]_{i,j}|$ \\
        $\|\cdot\|_{2,1}$ & Mixed $\ell_{2,1}$ norm: $\sum_g\|[\mathbf{X}]_{g,:}\|_2$ \\
        $\mathbb{I}(\cdot)$ & Indicator function ($1$ if true, $0$ otherwise) \\
        $\sign(\cdot)$ &  sign ($+1$ if $\ge0$, $-1$ otherwise) \\
        $\nabla f(\cdot)$ & Gradient of function $f$ \\
        $\partial f(\cdot)$ & Limiting subdifferential of $f$ \\
        $\prox_{\gamma f}(\cdot)$ & Proximal operator of $f$ with parameter $\gamma>0$ \\
        $\mathcal{O}(\cdot)$ & Big-O (asymptotic order) notation \\
        \bottomrule
    \end{tabular}
\end{table}

\section{Problem Formulation}
\label{sec:problem_formulation}

\subsection{Signal Model}
\label{subsec:signal_model}

Consider $K$ narrowband far-field sources with wavelength $\lambda$
impinging on an M-sensor UCA with a radius $R$. The $m$-th sensor is placed at the angular
position $\gamma_m = 2\pi(m-1)/M$, $m=1,\ldots,M$. The wavenumber is
$k_0 = 2\pi/\lambda$, and with $\xi = k_0R$, the array steering vector for
an azimuth angle $\theta$ is
\begin{equation}
  \mathbf{a}(\theta)
  =
  \big[e^{\mathrm{j}\xi\cos(\theta-\gamma_1)},\ldots,
      e^{\mathrm{j}\xi\cos(\theta-\gamma_M)}\big]^{\mathsf T}
  \in\mathbb{C}^{M}.
\label{eq:uca_steer}
\end{equation}

Let $\boldsymbol{\theta}=\{\theta_1,\ldots,\theta_K\}$ denote the set of
true DOAs in the angular search domain $\mathcal{D}$, and define the
steering matrix
\begin{equation}
  \mathbf{A}(\boldsymbol{\theta})
  =
  [\mathbf{a}(\theta_1),\ldots,\mathbf{a}(\theta_K)]
  \in\mathbb{C}^{M\times K}.
\end{equation}
At snapshot index $p\in\{1,\ldots,P\}$, the complex amplitude of the
$k$-th source is denoted by $s_k[p]\in\mathbb{C}$, and the source signal
vector is
$\mathbf{s}[p] = [\,s_1[p],\ldots,s_K[p]\,]^{\mathsf T} \in\mathbb{C}^{K}$.

Sensor phase errors are modeled by a diagonal matrix $\mathbf{D} = \operatorname{diag}(e^{\mathrm{j}\phi_{1}}, \ldots, e^{\mathrm{j}\phi_{M}})$, where $\phi_{m}$ denotes the phase error of the $m$-th sensor. The received signal vector (without noise) is thus
\begin{equation}
  \mathbf{x}[p] = \mathbf{D}\mathbf{A}(\boldsymbol{\theta})\,\mathbf{s}[p] \in\mathbb{C}^{M}.
\end{equation}
A key property of magnitude-only processing is that the phase error matrix $\mathbf{D}$ does not affect the magnitude measurements. Since $\mathbf{D}$ is diagonal, the $m$-th entry of $\mathbf{D}\mathbf{x}$ equals $e^{\mathrm{j}\phi_m} x_m$, whose magnitude satisfies $|e^{\mathrm{j}\phi_m} x_m| = |e^{\mathrm{j}\phi_m}| \cdot |x_m| = |x_m|$. Consequently, the  magnitude is invariant to phase errors:
\begin{equation}
  |\mathbf{D}\mathbf{A}(\boldsymbol{\theta})\,\mathbf{s}[p]| = |\mathbf{A}(\boldsymbol{\theta})\,\mathbf{s}[p]|.
\label{eq:phase_invariance}
\end{equation}
The corresponding magnitude-only measurement vector at snapshot $p$ is modeled as
\begin{equation}
  \mathbf{y}[p]
  =
  |\mathbf{x}[p]|
  +
  \mathbf{n}[p]
  =
  \big|\mathbf{A}(\boldsymbol{\theta})\,\mathbf{s}[p]\big|
  +
  \mathbf{n}[p],
\label{eq:mag_cont}
\end{equation}
where $\mathbf{n}[p]\in\mathbb{R}^{M}$ denotes zero-mean additive white Gaussian noise with variance $\sigma_n^2$. 

To enable sparse representation, we discretize
$\mathcal{D}$ into an angular grid
$\Theta=\{\vartheta_g\}_{g=1}^{G}$ with $G\gg K$, and form the
overcomplete steering matrix
\begin{equation}
  \mathbf{A}(\Theta)
  =
  [\mathbf{a}(\vartheta_1),\ldots,\mathbf{a}(\vartheta_G)]
  \in\mathbb{C}^{M\times G}.
\end{equation}
The source signal vector $\mathbf{s}[p]$ admits a sparse representation
on the angular grid $\Theta$ as
\begin{equation}
  \tilde{\mathbf{s}}[p]
  =
  [\,\tilde{s}_1[p],\ldots,\tilde{s}_G[p]\,]^{\mathsf T}
  \in\mathbb{C}^{G},
\end{equation}
where $\tilde{s}_g[p]$ denotes the coefficient associated with grid angle
$\vartheta_g$ at snapshot $p$. Under this representation, the magnitude-only model in~\eqref{eq:mag_cont} becomes
\begin{equation}
  \mathbf{y}[p]
  =
  \big|\mathbf{A}(\Theta)\,\tilde{\mathbf{s}}[p]\big|
  +
  \mathbf{n}[p].
\label{eq:mag_grid_vec}
\end{equation}
The source signals are assumed to be uncorrelated with power $\sigma_s^2$.
Collecting $P$ snapshots yields the multiple-measurement-vector (MMV) model
\begin{equation}
  \mathbf{Y}
  =
  \big|\mathbf{A}(\Theta)\,\tilde{\mathbf{S}}\big|
  +
  \mathbf{N},
\label{eq:R_model}
\end{equation}
where
$\mathbf{Y}
  =
  [\mathbf{y}[1],\ldots,\mathbf{y}[P]]
  \in\mathbb{R}^{M\times P}$
denotes the magnitude-only data matrix,
$\tilde{\mathbf{S}}
  =
  [\tilde{\mathbf{s}}[1],\ldots,\tilde{\mathbf{s}}[P]]
  \in\mathbb{C}^{G\times P}$
is the sparse signal matrix, and
$\mathbf{N}
  =
  [\mathbf{n}[1],\ldots,\mathbf{n}[P]]
  \in\mathbb{R}^{M\times P}$ is the noise matrix. Since the $K$ sources share the same DOAs across all snapshots, the matrix $\tilde{\mathbf{S}}$ exhibits row sparsity that only $K \ll G$ rows are nonzero, with the indices of these active rows corresponding to the grid angles closest to the true DOAs. 

Each magnitude-only snapshot is then quantized by a 
one-bit comparator:
\begin{equation}
  \mathbf{y}_{\mathrm{1bit}}[p]
  =
  \operatorname{sign}\!\big(\mathbf{y}[p]-\boldsymbol{\tau}\big),
\label{eq:quant_vec}
\end{equation}
where
$\boldsymbol{\tau}=[\tau_1,\ldots,\tau_M]^{\mathsf T}\in\mathbb{R}^{M}$
is the vector of quantization thresholds, assumed known, and $\operatorname{sign}(t)=+1$ for $t\ge 0$ and $-1$
otherwise. Collecting the one-bit magnitude-only measurements as $\mathbf{Y}_{\mathrm{1bit}}
  =
  [\mathbf{y}_{\mathrm{1bit}}[1],\ldots,\mathbf{y}_{\mathrm{1bit}}[P]]
  $
and with $\mathbf{T}=\boldsymbol{\tau}\,\mathbf{1}_P^{\mathsf T}$, the one-bit quantization model is
\begin{equation}
  \mathbf{Y}_{\mathrm{1bit}}
  =
  \operatorname{sign}\big(\mathbf{Y}-\mathbf{T}\big)
  =
  \operatorname{sign}\big(|\mathbf{A}(\Theta)\,\tilde{\mathbf{S}}|
                          +\mathbf{N}-\mathbf{T}\big).
\label{eq:Y_model}
\end{equation}

\subsection{Identifiability and Ambiguities}
\label{subsec:ambiguity}

Magnitude-only DOA estimation using ULAs is fundamentally limited by inherent ambiguities, most notably \emph{mirroring}, \emph{spatial-shift}, and \emph{spatial-order} ambiguities~\cite{wan2021dual,wan2021uca}. While the mirroring ambiguity in ULAs arises from the conjugate symmetry $\mathbf{a}(-\theta)=\mathbf{a}(\theta)^*$, the UCA manifold naturally avoids this issue. 

Furthermore, the spatial-shift and spatial-order ambiguities in ULAs are direct consequences of the linear phase progression inherent to the Vandermonde structure, given by $[\mathbf{a}(\theta)]_m = e^{j(m-1)\varpi}$, where $\varpi = k_0 d \sin\theta$ represents the spatial frequency. Specifically, the \emph{spatial-shift ambiguity} arises because an arbitrary linear phase shift $\delta(m-1)$ implies $[\mathbf{a}(\theta)]_m e^{j(m-1)\delta} = e^{j(m-1)(\varpi+\delta)}$, which is mathematically indistinguishable from a signal arriving from a new direction corresponding to $\varpi+\delta$. Similarly, the \emph{spatial-order ambiguity} (at $d=\lambda/2$) stems from the $2\pi$-periodicity condition $e^{j(m-1)\varpi} = e^{j(m-1)(\varpi - 2\pi)}$. If a shift causes the spatial frequency to exceed the principal interval (i.e., $\varpi + \delta > \pi$), it wraps to an equivalent frequency $\varpi' = \varpi + \delta - 2\pi$. Since $\varpi'$ corresponds to a valid angle $\theta'$ (where $\sin\theta' = \sin\theta + \delta/k_0 d - 2\pi/k_0 d$), this wrapping mechanism can alter the spatial ordering of sources without affecting the magnitude measurements. Conversely, for the UCA geometry, the adjacent phase increment varies with the sensor index $m$, as given by:
\begin{align}
\Delta\phi_m(\theta)
&\triangleq \xi\big[\cos(\theta-\gamma_{m+1})-\cos(\theta-\gamma_m)\big] \nonumber\\
&= 2\xi\sin\left(\frac{\pi}{M}\right)
\sin\left(\theta-\frac{\pi(2m-1)}{M}\right).
\label{eq:delta_phi_uca}
\end{align}
As a result, the shift-invariance required for both spatial-shift and spatial-order ambiguities is strictly broken. 

Nevertheless, UCAs exhibit a specific $180^\circ$ rotational ambiguity, as $\mathbf{a}(\theta+\pi)=\mathbf{a}(\theta)^*$ implies that $\{(\theta_k,s_k)\}$ and $\{(\theta_k+\pi,s_k^*)\}$ yield identical magnitude measurements. This is typically resolved by restricting the angular sector (e.g., to $[-\pi/2,\pi/2]$) or utilizing a reference source.

\subsection{Limitations of Direct Loss Design}
\label{subsec:why_not_amp}

We adopt the sparsity signal model in \eqref{eq:R_model}--\eqref{eq:Y_model}, where the unknown 
signal matrix $\tilde{\mathbf{S}}\in\mathbb{C}^{G\times P}$ is row-sparse with $K\ll G$ nonzero rows.
In the one-bit noncoherent setting, only the thresholded measurements $\mathbf{Y}_{\mathrm{1bit}}$ in 
\eqref{eq:Y_model} are observed. 
A natural approach is to directly match the predicted one-bit outputs 
$\widehat{\mathbf{Y}}_{\mathrm{1bit}}(\tilde{\mathbf{S}}) \triangleq \operatorname{sign}\bigl(|\mathbf{A}(\Theta)\tilde{\mathbf{S}}|-\mathbf{T}\bigr)$
to the observations:
\begin{equation}
\label{eq:ideal_01_loss}
\min_{\tilde{\mathbf{S}}}\;
\bigl\|\mathbf{Y}_{\mathrm{1bit}}-\widehat{\mathbf{Y}}_{\mathrm{1bit}}(\tilde{\mathbf{S}})\bigr\|_F^2
+\eta\|\tilde{\mathbf{S}}\|_{2,1},
\end{equation}
where $\eta>0$ is the regularization parameter. 
Since the entries of both $\mathbf{Y}_{\mathrm{1bit}}$ and $\widehat{\mathbf{Y}}_{\mathrm{1bit}}(\tilde{\mathbf{S}})$ are constrained to $\{+1, -1\}$, the squared Frobenius norm of their difference is proportional to the number of sign mismatches. 
Specifically, for any entry $(m,p)$, the squared error $([\mathbf{Y}_{\mathrm{1bit}}]_{m,p} - [\widehat{\mathbf{Y}}_{\mathrm{1bit}}(\tilde{\mathbf{S}})]_{m,p})^2$ equals $0$ if the signs match and $4$ if they differ. Consequently,
\begin{equation}
\label{eq:01_loss_equiv}
\bigl\|\mathbf{Y}_{\mathrm{1bit}}-\widehat{\mathbf{Y}}_{\mathrm{1bit}}(\tilde{\mathbf{S}})\bigr\|_F^2
 = 4 \cdot d_H\Bigl(\mathbf{Y}_{\mathrm{1bit}}, \widehat{\mathbf{Y}}_{\mathrm{1bit}}(\tilde{\mathbf{S}})\Bigr),
\end{equation}
where $d_H(\cdot, \cdot)$ denotes the Hamming distance defined as:
\begin{equation}
d_H(\mathbf{B},\mathbf{C})
\triangleq
\sum_{m=1}^{M}\sum_{p=1}^{P}
\mathbb{I}\!\left([\mathbf{B}]_{m,p}\neq[\mathbf{C}]_{m,p}\right).
\end{equation}

\begin{lemma}
\label{lem:local_constancy}
Let $\widehat{\mathbf{Y}}_{\mathrm{1bit}}(\tilde{\mathbf{S}})\triangleq
\operatorname{sign}\bigl(|\mathbf{A}(\Theta)\tilde{\mathbf{S}}|-\mathbf{T}\bigr)$ denote the one-bit prediction.
Consider any point $\tilde{\mathbf{S}}_0\in\mathbb{C}^{G\times P}$ satisfying the non-degeneracy condition (i.e., no predicted magnitude lies exactly on the threshold):
\begin{equation}
\label{eq:margin_condition}
\bigl|[\mathbf{A}(\Theta)\tilde{\mathbf{S}}_0]_{m,p}\bigr|\neq \tau_m, 
\qquad \forall\, (m,p).
\end{equation}
We define the minimum decision margin $\rho$ as
\begin{equation}
\label{eq:rho_def}
\rho \triangleq \min_{m,p} \left|
\bigl|[\mathbf{A}(\Theta)\tilde{\mathbf{S}}_0]_{m,p}\bigr|-\tau_m
\right|.
\end{equation}
Condition~\eqref{eq:margin_condition} guarantees that $\rho > 0$.
The one-bit output is locally constant, i.e., $\widehat{\mathbf{Y}}_{\mathrm{1bit}}(\tilde{\mathbf{S}})=\widehat{\mathbf{Y}}_{\mathrm{1bit}}(\tilde{\mathbf{S}}_0)$, provided that the magnitude perturbation satisfies
\begin{equation}
\label{eq:local_delta_condition}
\bigl\||\mathbf{A}(\Theta)\tilde{\mathbf{S}}|-|\mathbf{A}(\Theta)\tilde{\mathbf{S}}_0|\bigr\|_\infty
< \rho.
\end{equation}
Consequently, a sufficient condition in the variable domain is given by
\begin{equation}
\label{eq:local_ball_condition}
\|\tilde{\mathbf{S}}-\tilde{\mathbf{S}}_0\|_F
<
\frac{\rho}{\|\mathbf{A}(\Theta)\|_2}.
\end{equation}
\end{lemma}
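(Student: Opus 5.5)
The plan is to argue entrywise, then lift the entrywise statement to the variable domain using the reverse triangle inequality and the operator norm of $\mathbf{A}(\Theta)$. Write $\mathbf{Z}_0=\mathbf{A}(\Theta)\tilde{\mathbf{S}}_0$ and $\mathbf{Z}=\mathbf{A}(\Theta)\tilde{\mathbf{S}}$. Define $r_{m,p}\triangleq|[\mathbf{Z}_0]_{m,p}|-\tau_m$. Condition~\eqref{eq:margin_condition} says every $r_{m,p}\neq 0$. Since the minimum in \eqref{eq:rho_def} is over the finite set of $MP$ indices, $\rho=\min|r_{m,p}|>0$.

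\textbf{Step 1 (sign preservation).} Fix $(m,p)$ and set $\delta_{m,p}=|[\mathbf{Z}]_{m,p}|-|[\mathbf{Z}_0]_{m,p}|$. Then
\begin{equation*}
|[\mathbf{Z}]_{m,p}|-\tau_m=r_{m,p}+\delta_{m,p}.
\end{equation*}
Under \eqref{eq:local_delta_condition} we have $|\delta_{m,p}|<\rho\le|r_{m,p}|$. Hence $r_{m,p}+\delta_{m,p}$ is strictly positive when $r_{m,p}>0$ and strictly negative when $r_{m,p}<0$. The sign is therefore unchanged, and it never hits zero, so the convention $\operatorname{sign}(0)=+1$ plays no role. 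Applying this to every entry gives $\widehat{\mathbf{Y}}_{\mathrm{1bit}}(\tilde{\mathbf{S}})=\widehat{\mathbf{Y}}_{\mathrm{1bit}}(\tilde{\mathbf{S}}_0)$.

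\textbf{Step 2 (variable-domain condition).} I will show that \eqref{eq:local_ball_condition} implies \eqref{eq:local_delta_condition}. By the reverse triangle inequality for complex moduli,
\begin{equation*}
\bigl||[\mathbf{Z}]_{m,p}|-|[\mathbf{Z}_0]_{m,p}|\bigr|\le|[\mathbf{Z}-\mathbf{Z}_0]_{m,p}|.
\end{equation*}
Taking the maximum over entries, then comparing the max-entry norm with the Frobenius norm, gives
\begin{equation*}
\bigl\||\mathbf{Z}|-|\mathbf{Z}_0|\bigr\|_\infty\le\|\mathbf{Z}-\mathbf{Z}_0\|_\infty\le\|\mathbf{A}(\Theta)(\tilde{\mathbf{S}}-\tilde{\mathbf{S}}_0)\|_F.
\end{equation*}
Finally, using $\|\mathbf{A}\mathbf{X}\|_F\le\|\mathbf{A}\|_2\|\mathbf{X}\|_F$ (apply the spectral bound to each column and sum the squares), the right-hand side is at most $\|\mathbf{A}(\Theta)\|_2\|\tilde{\mathbf{S}}-\tilde{\mathbf{S}}_0\|_F<\rho$. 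Step 1 then applies.

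\textbf{Main obstacle.} The argument is elementary, and the only real care is in the norm chain of Step 2. One should use the operator (spectral) norm of $\mathbf{A}(\Theta)$ together with the Frobenius norm of the perturbation, and note that the entrywise maximum is dominated by the Frobenius norm. The bound is loose but sufficient. $\|\mathbf{A}(\Theta)\|_2>0$ holds because the steering entries are unimodular, so the bound in \eqref{eq:local_ball_condition} is well defined.
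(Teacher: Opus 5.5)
Your proposal is correct and follows the paper's proof: the same entrywise argument that $|u-u_0|<\rho\le|u_0|$ keeps the sign, followed by the same norm chain (reverse triangle inequality, $\|\cdot\|_\infty\le\|\cdot\|_F$, and $\|\mathbf{A}(\Theta)\mathbf{X}\|_F\le\|\mathbf{A}(\Theta)\|_2\|\mathbf{X}\|_F$). Your remarks that the perturbed residual never reaches zero, so the $\operatorname{sign}(0)=+1$ convention is irrelevant, and that $\|\mathbf{A}(\Theta)\|_2>0$, make explicit two small points the paper leaves implicit.
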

The proof is provided in Appendix~\ref{app:proof_local_constancy}.

Equation~\eqref{eq:01_loss_equiv} shows that the squared error in
\eqref{eq:ideal_01_loss} is proportional to the Hamming distance, i.e., it
counts the number of sign mismatches. Therefore, reducing the loss essentially
requires discrete sign-flip decisions, which results in a nonsmooth and
combinatorial optimization landscape.

Moreover, Lemma~\ref{lem:local_constancy} shows that
$\operatorname{sign}(|\mathbf A(\Theta)\tilde{\mathbf S}|-\mathbf T)$ is locally
constant away from the thresholds. Consequently, the objective in
\eqref{eq:ideal_01_loss} is piecewise constant over large regions and provides
little useful descent information, so proximal-gradient iterations can be employed.

These observations motivate us replacing $\operatorname{sign}(\cdot)$ by a smooth approximation.

\section{Proposed Method}
\label{sec:proposed_method}

\subsection{Sign-Consistency Formulation and Algorithm}
\label{subsec:formulation_algorithm}
We now introduce a smooth surrogate that
retains the same threshold-side consistency while enabling gradient-based optimization.

Define the thresholded magnitude residual matrix
\begin{equation}
\label{eq:residual_matrix}
\mathbf{R}(\tilde{\mathbf{S}})
\triangleq
|\mathbf{A}(\Theta)\tilde{\mathbf{S}}|-\mathbf{T}
\in\mathbb{R}^{M\times P},
\end{equation}
where $[\mathbf{T}]_{m,p}=\tau_m$ for all $p$. A natural consistency requirement is that the predicted
magnitudes lie on the same side of the thresholds as indicated by the one-bit observations:
\begin{equation}
\label{eq:sign_consistency}
[\mathbf{Y}_{\mathrm{1bit}}]_{m,p}\,[\mathbf{R}(\tilde{\mathbf{S}})]_{m,p}\ge 0,
\qquad \forall\, m,p.
\end{equation}
When $[\mathbf{R}(\tilde{\mathbf{S}})]_{m,p}\neq 0$, \eqref{eq:sign_consistency} is equivalent to having
no sign mismatches in \eqref{eq:01_loss_equiv}; the equality case corresponds to the threshold boundary
$|[\mathbf{A}(\Theta)\tilde{\mathbf{S}}]_{m,p}|=\tau_m$.

To obtain a smooth surrogate for the sign-consistency requirement
in \eqref{eq:sign_consistency}, we replace the $0$--$1$ mismatch count by a logistic surrogate.
For each entry $(m,p)$, define
\begin{equation}
\label{eq:logloss_scalar}
[\boldsymbol{\ell}(\tilde{\mathbf{S}})]_{m,p}
\triangleq
\log\!\Big(1+\exp\!\big(-\beta\,[\mathbf{Y}_{\mathrm{1bit}}]_{m,p}\,[\mathbf{R}(\tilde{\mathbf{S}})]_{m,p}\big)\Big),
\end{equation}
where $\beta>0$ controls the sharpness around the threshold. 

Aggregating over all entries yields the sign-consistency loss
\begin{equation}
\label{eq:L_matrix}
L(\tilde{\mathbf{S}})
\triangleq
\frac{1}{MP}\sum_{m=1}^{M}\sum_{p=1}^{P} [\boldsymbol{\ell}(\tilde{\mathbf{S}})]_{m,p}.
\end{equation}

The loss $L(\tilde{\mathbf{S}})$ depends on the magnitudes
$\big|[\mathbf{A}(\Theta)\tilde{\mathbf{S}}]_{m,p}\big|$, which are
non-differentiable when $[\mathbf{A}(\Theta)\tilde{\mathbf{S}}]_{m,p}=0$.
To obtain a smooth objective, we introduce the smoothed magnitude 
applied to $\mathbf{A}(\Theta)\tilde{\mathbf{S}}$:
\begin{equation}
\label{eq:smoothed_mag_matrix}
\big[\,|\mathbf{A}(\Theta)\tilde{\mathbf{S}}|_\epsilon\,\big]_{m,p}
\triangleq
\sqrt{\big|[\mathbf{A}(\Theta)\tilde{\mathbf{S}}]_{m,p}\big|^2+\epsilon^2}.
\end{equation}
Here $\epsilon$ is a small smoothing parameter to regularize the magnitude
operation and ensure differentiability at zero.

Accordingly, define the smoothed threshold residual
\begin{equation}
\label{eq:residual_eps}
\mathbf{R}_\epsilon(\tilde{\mathbf{S}})
\triangleq
|\mathbf{A}(\Theta)\tilde{\mathbf{S}}|_\epsilon-\mathbf{T},
\end{equation}
and the smooth sign-consistency loss
\begin{equation}
\label{eq:L_eps}
L_\epsilon(\tilde{\mathbf{S}})
\triangleq
\frac{1}{MP}\sum_{m=1}^{M}\sum_{p=1}^{P}
\log\!\left(1+e^{-\beta\,[\mathbf{Y}_{\mathrm{1bit}}]_{m,p}\,
[\mathbf{R}_\epsilon(\tilde{\mathbf{S}})]_{m,p}}\right).
\end{equation}

Combining \eqref{eq:L_eps} with an $\ell_{2,1}$ penalty yields the proposed
group sparsity formulation
\begin{equation}
\label{eq:main_obj_matrix}
\min_{\tilde{\mathbf{S}}\in\mathbb{C}^{G\times P}}
\;F_\epsilon(\tilde{\mathbf{S}})
\triangleq
L_\epsilon(\tilde{\mathbf{S}})+\eta\|\tilde{\mathbf{S}}\|_{2,1},
\end{equation}
where $\eta>0$ and $\|\tilde{\mathbf{S}}\|_{2,1}=\sum_{g=1}^{G}\|\tilde{\mathbf{S}}_{g,:}\|_2$.
For reference, we also define the unsmoothed objective
\begin{equation}
\label{eq:F_unsmoothed}
F(\tilde{\mathbf{S}})
\triangleq
L(\tilde{\mathbf{S}})+\eta\|\tilde{\mathbf{S}}\|_{2,1}.
\end{equation}
Problem \eqref{eq:main_obj_matrix} is tailored to one-bit observations, in that 
$L_\epsilon$ depends only on the threshold-side information in
$\mathbf{Y}_{\mathrm{1bit}}$ and avoids fitting unavailable full-precision magnitudes.

\medskip
The loss $L_\epsilon(\tilde{\mathbf{S}})$ in \eqref{eq:L_eps} is continuously differentiable.
Since it is real-valued and depends on complex variables,  Wirtinger calculus is employed here.
For notational convenience, let $\mathbf{Z} = \mathbf{A}(\Theta)\tilde{\mathbf{S}}\in\mathbb{C}^{M\times P}$ denote the noiseless array output.
For each entry $(m,p)$, the partial derivative of 
$\log(1+e^{-\beta\,[\mathbf{Y}_{\mathrm{1bit}}]_{m,p}(|Z_{m,p}|_\epsilon-\tau_m)})$
with respect to $Z_{m,p}^*$ is
\begin{align}
\label{eq:partial_ell}
\frac{\partial \ell_{m,p}}{\partial Z_{m,p}^*}
&=
-\frac{\beta\,[\mathbf{Y}_{\mathrm{1bit}}]_{m,p}}{2|Z_{m,p}|_\epsilon} \nonumber\\
&\quad \cdot \sigma\!\big(-\beta\,[\mathbf{Y}_{\mathrm{1bit}}]_{m,p}(|Z_{m,p}|_\epsilon-\tau_m)\big)
\cdot Z_{m,p},
\end{align}
where $\sigma(t)=(1+e^{-t})^{-1}$ denotes the sigmoid function and $|Z_{m,p}|_\epsilon=\sqrt{|Z_{m,p}|^2+\epsilon^2}$.
We adopt the standard convention for Wirtinger derivatives of real-valued functions as $\nabla_{\mathbf{Z}}f \triangleq 2 \frac{\partial f}{\partial \mathbf{Z}^*}$. This definition ensures that the complex gradient corresponds directly to the direction of steepest ascent in the underlying real vector space.

Collecting \eqref{eq:partial_ell} over $(m,p)$ and applying the chain rule through
$\mathbf{Z}=\mathbf{A}(\Theta)\tilde{\mathbf{S}}$, the gradient of
$L_\epsilon=\frac{1}{MP}\sum_{m=1}^{M}\sum_{p=1}^{P}\ell_{m,p}$ can be written as
\begin{equation}
\label{eq:grad_L_eps}
\nabla L_\epsilon(\tilde{\mathbf{S}})
=
\frac{1}{MP}\mathbf{A}(\Theta)^{\mathsf H}\big(\mathbf{P}_\epsilon \odot \mathbf{W}\big),
\end{equation}
and
\begin{align}
\label{eq:P_eps_def}
[\mathbf{P}_\epsilon]_{m,p}
&\triangleq \frac{Z_{m,p}}{|Z_{m,p}|_\epsilon},\\
\label{eq:W_def}
[\mathbf{W}]_{m,p}
&\triangleq
-\beta\,[\mathbf{Y}_{\mathrm{1bit}}]_{m,p}\,
\sigma\!\Big(-\beta\,[\mathbf{Y}_{\mathrm{1bit}}]_{m,p}\,
[\mathbf{R}_\epsilon(\tilde{\mathbf{S}})]_{m,p}\Big),
\end{align}
with $\mathbf{R}_\epsilon(\tilde{\mathbf{S}})=|\mathbf{Z}|_\epsilon-\mathbf{T}$ as in
\eqref{eq:residual_eps}.

\begin{lemma}
\label{lem:Lipschitz}
The gradient $\nabla L_\epsilon$ is Lipschitz continuous with constant
\begin{equation}
\label{eq:Lip_const}
L_{\mathrm{Lip}}
=
\frac{\beta\|\mathbf{A}(\Theta)\|_2^2}{MP}
\left(\frac{\beta}{4} + \frac{1}{\epsilon}\right).
\end{equation}
\end{lemma}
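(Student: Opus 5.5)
The plan is to reduce the claim to an entrywise Lipschitz bound on the map $\mathbf{Z}\mapsto \mathbf{G}(\mathbf{Z})\triangleq \mathbf{P}_\epsilon\odot\mathbf{W}$ and then chain through $\mathbf{Z}=\mathbf{A}(\Theta)\tilde{\mathbf{S}}$. By \eqref{eq:grad_L_eps}, for any $\tilde{\mathbf{S}}_1,\tilde{\mathbf{S}}_2$ with $\mathbf{Z}_i=\mathbf{A}(\Theta)\tilde{\mathbf{S}}_i$,
\begin{equation*}
\|\nabla L_\epsilon(\tilde{\mathbf{S}}_1)-\nabla L_\epsilon(\tilde{\mathbf{S}}_2)\|_F
\le \frac{\|\mathbf{A}(\Theta)\|_2}{MP}\,\|\mathbf{G}(\mathbf{Z}_1)-\mathbf{G}(\mathbf{Z}_2)\|_F ,
\end{equation*}
and $\|\mathbf{Z}_1-\mathbf{Z}_2\|_F\le\|\mathbf{A}(\Theta)\|_2\|\tilde{\mathbf{S}}_1-\tilde{\mathbf{S}}_2\|_F$. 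Since $\mathbf{G}$ acts entrywise, it therefore suffices to show that each scalar map $g(z)=\frac{z}{|z|_\epsilon}\,w(|z|_\epsilon)$, with $w(r)=-\beta y\,\sigma(-\beta y(r-\tau))$ and $y\in\{\pm1\}$, is Lipschitz on $\mathbb{C}\cong\mathbb{R}^2$ with constant $\beta(\beta/4+1/\epsilon)$. Summing squares over $(m,p)$ then yields \eqref{eq:Lip_const}.

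For the scalar bound I will split the difference as
\begin{equation*}
g(z_1)-g(z_2)=\Big(\tfrac{z_1}{|z_1|_\epsilon}-\tfrac{z_2}{|z_2|_\epsilon}\Big)w(|z_1|_\epsilon)+\tfrac{z_2}{|z_2|_\epsilon}\big(w(|z_1|_\epsilon)-w(|z_2|_\epsilon)\big).
\end{equation*}
The second term is handled as follows: $|z_2|/|z_2|_\epsilon\le1$, and $|w'(r)|=\beta^2\sigma(1-\sigma)\le\beta^2/4$. Moreover, $r=|z|_\epsilon$ is $1$-Lipschitz in $z$, since its gradient $z/|z|_\epsilon$ has modulus at most $1$. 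Hence this term is bounded by $\tfrac{\beta^2}{4}|z_1-z_2|$. For the first term, I will use $|w|\le\beta$ together with the fact that $\mathbf{p}(z)=z/\sqrt{|z|^2+\epsilon^2}$ is $1/\epsilon$-Lipschitz. To see the latter, compute its real Jacobian $\frac{1}{|z|_\epsilon}\big(I-\frac{uu^{\mathsf T}}{|z|_\epsilon^2}\big)$, where $u$ is the real embedding of $z$; its eigenvalues are $\epsilon^2/|z|_\epsilon^3$ and $1/|z|_\epsilon$, both at most $1/\epsilon$. The mean value inequality along segments in the convex set $\mathbb{R}^2$ then gives the bound. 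This term is therefore at most $\tfrac{\beta}{\epsilon}|z_1-z_2|$.

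Adding the two terms gives the entrywise constant $\beta(\beta/4+1/\epsilon)$, and the Frobenius aggregation combined with the two factors of $\|\mathbf{A}(\Theta)\|_2$ produces exactly $L_{\mathrm{Lip}}$. The main point needing care is consistency with the Wirtinger convention $\nabla=2\partial/\partial\mathbf{Z}^*$. I must check that \eqref{eq:grad_L_eps} is indeed the real gradient under the identification $\mathbb{C}^{G\times P}\cong\mathbb{R}^{2GP}$ with inner product $\Real\tr(\cdot^{\mathsf H}\cdot)$, so that Frobenius norms of complex differences coincide with the real Euclidean norms used in the descent lemma. Beyond that, the bounds are crude but match the stated constant, which suggests that the decomposition above, rather than a sharper Hessian bound, is the intended route.
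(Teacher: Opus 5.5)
Your proposal is correct and follows essentially the same route as the paper's proof. Both arguments use the same entrywise product-rule split of $\mathbf{P}_\epsilon\odot\mathbf{W}$ (you only swap which point each factor is evaluated at), the bounds $|p|\le 1$ and $|w|\le\beta$, the $\beta^2/4$-Lipschitzness of the sigmoid factor composed with the $1$-Lipschitz map $z\mapsto|z|_\epsilon$, the $1/\epsilon$-Lipschitzness of $z/|z|_\epsilon$, and two factors of $\|\mathbf{A}(\Theta)\|_2$ from the chain rule; your explicit Jacobian eigenvalues $\epsilon^2/|z|_\epsilon^3$ and $1/|z|_\epsilon$ simply make precise the step the paper asserts in one line.
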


\begin{proof}
See Appendix~\ref{app:proof_Lipschitz}.
\end{proof}

The constant $L_{\mathrm{Lip}}$ reflects the combined curvature of the logistic loss and the smoothed magnitude. Specifically, the term $\beta^2/4$ bounds the second derivative of the logistic function, while $\beta/\epsilon$ arises from the Jacobian bound of the smoothed magnitude operator $|z|_\epsilon$. As $\epsilon \to 0$, the gradient changes more rapidly near zero, requiring a smaller step size.

\begin{lemma}
\label{lem:smoothing_error}
For any $\tilde{\mathbf{S}}\in\mathbb{C}^{G\times P}$,
$|L_\epsilon(\tilde{\mathbf{S}}) - L(\tilde{\mathbf{S}})| \le \beta\epsilon$.
\end{lemma}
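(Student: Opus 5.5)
The proof works entrywise and then averages. The key facts are that the scalar logistic function $h(t)\triangleq\log(1+e^{-t})$ is $1$-Lipschitz, and that the smoothed magnitude differs from the true magnitude by at most $\epsilon$.

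Fix $\tilde{\mathbf{S}}$, set $\mathbf{Z}=\mathbf{A}(\Theta)\tilde{\mathbf{S}}$, and write $y_{m,p}=[\mathbf{Y}_{\mathrm{1bit}}]_{m,p}\in\{\pm1\}$. Comparing \eqref{eq:L_matrix} and \eqref{eq:L_eps}, the two losses are averages of
\[
h\big(\beta y_{m,p}(|Z_{m,p}|_\epsilon-\tau_m)\big)
\quad\text{and}\quad
h\big(\beta y_{m,p}(|Z_{m,p}|-\tau_m)\big).
\]
Hence, by the triangle inequality,
\[
|L_\epsilon(\tilde{\mathbf{S}})-L(\tilde{\mathbf{S}})|\le\frac{1}{MP}\sum_{m,p}\big|h(a_{m,p})-h(b_{m,p})\big|,
\]
where $a_{m,p}$ and $b_{m,p}$ denote the two arguments above.

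\textbf{Step 1 (Lipschitz property of $h$).} We have $h'(t)=-e^{-t}/(1+e^{-t})=-\sigma(-t)\in(-1,0)$. By the mean value theorem, $|h(a)-h(b)|\le|a-b|$.

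\textbf{Step 2 (smoothing gap).} Since $|y_{m,p}|=1$, the thresholds $\tau_m$ cancel and
\[
|a_{m,p}-b_{m,p}|=\beta\big(|Z_{m,p}|_\epsilon-|Z_{m,p}|\big).
\]
For any $r\ge0$,
\[
0\le\sqrt{r^2+\epsilon^2}-r=\frac{\epsilon^2}{\sqrt{r^2+\epsilon^2}+r}\le\frac{\epsilon^2}{\epsilon}=\epsilon,
\]
with equality at $r=0$. Therefore each entry contributes at most $\beta\epsilon$.

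\textbf{Step 3 (averaging).} Averaging the entrywise bound over the $MP$ entries gives $|L_\epsilon-L|\le\beta\epsilon$, uniformly in $\tilde{\mathbf{S}}$.

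There is no real obstacle here. The only points needing care are to use the global bound $|h'|\le1$ (rather than a local one) and to handle the case $Z_{m,p}=0$, which the elementary inequality in Step 2 covers.

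As an immediate corollary, the same bound $|F_\epsilon(\tilde{\mathbf{S}})-F(\tilde{\mathbf{S}})|\le\beta\epsilon$ holds, because the $\ell_{2,1}$ terms in \eqref{eq:main_obj_matrix} and \eqref{eq:F_unsmoothed} are identical and cancel.
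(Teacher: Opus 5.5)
Your proof is correct and is essentially the paper's argument. The paper bounds each entry using the $\beta$-Lipschitz property of $\ell(u)=\log(1+e^{-\beta u})$ together with $0\le\sqrt{a^2+\epsilon^2}-a\le\epsilon$ and then averages; you instead put $\beta$ inside the argument of the $1$-Lipschitz function $h$, which is only a cosmetic difference.
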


The proof is provided in Appendix~\ref{app:proof_smoothing_error}. This uniform bound implies that the gap between the optimal objective values of the smoothed and unsmoothed problems is bounded by $\beta\epsilon$, i.e., $|F_\epsilon(\tilde{\mathbf{S}}_\epsilon^\star) - F(\tilde{\mathbf{S}}^\star)| \le \beta\epsilon$. This justifies using a moderate $\epsilon$ (e.g., $10^{-3}$) to balance accuracy and conditioning.

\begin{remark}
As $\epsilon\to 0$, the smoothed magnitude $|z|_\epsilon\to|z|$ and the surrogate objective $L_\epsilon$ recovers the original non-smooth loss $L$, but the Lipschitz constant $L_{\mathrm{Lip}}$ in~\eqref{eq:Lip_const} grows as $\mathcal{O}(1/\epsilon)$, requiring smaller step sizes and hence more iterations for convergence. Conversely, a larger $\epsilon$ yields a better-conditioned landscape with faster convergence, but introduces an approximation bias bounded by $\beta\epsilon$ (Lemma~\ref{lem:smoothing_error}).
\end{remark}

With the gradient \eqref{eq:grad_L_eps} and the Lipschitz bound
\eqref{eq:Lip_const}, we minimize \eqref{eq:main_obj_matrix} using a
proximal-gradient scheme. Starting from $\tilde{\mathbf{S}}^{(0)}$, the update is
\begin{equation}
\label{eq:pg_update}
\tilde{\mathbf{S}}^{(t+1)}
=
\prox_{\mu\eta\|\cdot\|_{2,1}}
\!\Big(\tilde{\mathbf{S}}^{(t)}-\mu\,\nabla L_\epsilon(\tilde{\mathbf{S}}^{(t)})\Big),
\end{equation}
where the step size satisfies $\mu<1/L_{\mathrm{Lip}}$ (e.g., $\mu=c/L_{\mathrm{Lip}}$ with
$c\in(0,1)$). The proximal operator of the $\ell_{2,1}$ norm is the  block
soft-thresholding~\cite{parikh2014proximal}:
\begin{equation}
\label{eq:block_shrink}
\big[\prox_{\gamma\|\cdot\|_{2,1}}(\mathbf{X})\big]_{g,:}
=
\left(1-\frac{\gamma}{\|\mathbf{X}_{g,:}\|_2}\right)_{\!+}\mathbf{X}_{g,:},
\end{equation}
for $g=1,\ldots,G$, where $(a)_+ = \max\{a,\,0\}$ denotes the positive part of $a$. Hence an entire row is set to zero when
$\|\mathbf{X}_{g,:}\|_2\le\gamma$, which promotes row sparsity across multiple snapshots.

Since $F_\epsilon$ is nonconvex, we use multiple random initializations and select the
run attaining the smallest objective value. Specifically, we draw
$\tilde{\mathbf{S}}^{(0)}\sim\mathcal{CN}(\mathbf{0},\sigma_0^2\mathbf{I})$,
run \eqref{eq:pg_update} until the relative change of $F_\epsilon$ falls below a tolerance
$\delta_{\mathrm{tol}}$ (or $t=T_{\max}$), and repeat this for $N_{\mathrm{init}}$ times.
After selecting $\tilde{\mathbf{S}}^{\star}$, DOAs are obtained by taking the $K$ grid angles
corresponding to the largest row norms $\|\tilde{\mathbf{S}}^{\star}_{g,:}\|_2$.

\begin{remark}
\label{rem:initialization}
The objective $F_\epsilon$ is nonconvex because, for each entry with 
$[\mathbf{Y}_{\mathrm{1bit}}]_{m,p}=+1$, 
$\log\!\bigl(1+e^{-\beta(|z|_\epsilon-\tau_m)}\bigr)$ is the composition 
of a convex nonincreasing function with the convex map $z\mapsto|z|_\epsilon$, 
which does not preserve convexity. 
To mitigate the resulting local minima, we employ 
$N_{\mathrm{init}}$ independent random restarts and retain 
$\tilde{\mathbf{S}}^{\star} = \arg\min_{1\le n\le N_{\mathrm{init}}} 
F_\epsilon(\tilde{\mathbf{S}}_n)$ as described in Algorithm~\ref{alg:proposed}.
\end{remark}

The proposed One-BIt Magnitude-Only DOA Estimation via
Sign-consistency and group sparsiTy (OBI-MODEST) algorithm is  summarized in Algorithm~\ref{alg:proposed}.

\begin{algorithm}[t]
\caption{The Proposed OBI-MODEST}
\label{alg:proposed}
\begin{algorithmic}[1]
\Require One-bit observations $\mathbf{Y}_{\mathrm{1bit}}\in\{-1,+1\}^{M\times P}$,
steering matrix $\mathbf{A}(\Theta)\in\mathbb{C}^{M\times G}$,
thresholds $\boldsymbol{\tau}\in\mathbb{R}^{M}$,
and parameters $(\beta,\eta,\epsilon,\mu,\delta_{\mathrm{tol}},T_{\max},N_{\mathrm{init}},\sigma_0)$
\Ensure DOA estimates $\hat{\boldsymbol{\theta}}$
\State $\mathbf{T} \gets \boldsymbol{\tau}\mathbf{1}_P^{\mathsf T}$ 

\For{$n = 1, \ldots, N_{\mathrm{init}}$}
    \State Initialize $\tilde{\mathbf{S}}^{(0)} \sim \mathcal{CN}(\mathbf{0}, \sigma_0^2\mathbf{I})$
    \For{$t = 0, 1, \ldots, T_{\max}-1$}
        \State $\mathbf{Z} \gets \mathbf{A}(\Theta)\tilde{\mathbf{S}}^{(t)}$ 
        \State Compute $|\mathbf{Z}|_\epsilon$ via \eqref{eq:smoothed_mag_matrix}
        \State Compute $\mathbf{R}_\epsilon$, $\mathbf{P}_\epsilon$, $\mathbf{W}$ via \eqref{eq:residual_eps}, \eqref{eq:P_eps_def}, \eqref{eq:W_def}
        \State Compute $\nabla L_\epsilon(\tilde{\mathbf{S}}^{(t)})$ via \eqref{eq:grad_L_eps}
        \State Update $\tilde{\mathbf{S}}^{(t+1)}$ via \eqref{eq:pg_update}--\eqref{eq:block_shrink}
        \State $\Delta_{\mathrm{rel}} \gets \big|F_\epsilon^{(t+1)}-F_\epsilon^{(t)}\big|\big/F_\epsilon^{(t)}$ 
        \If{$\Delta_{\mathrm{rel}} < \delta_{\mathrm{tol}}$} \textbf{break} \EndIf
    \EndFor
    \State Store $(\tilde{\mathbf{S}}_n,\,F_\epsilon(\tilde{\mathbf{S}}_n))$
\EndFor

\State $\tilde{\mathbf{S}}^{\star} \gets \arg\min_n F_\epsilon(\tilde{\mathbf{S}}_n)$
\State \Return $\hat{\boldsymbol{\theta}} \gets K$ grid angles with largest $\|\tilde{\mathbf{S}}^{\star}_{g,:}\|_2$
\end{algorithmic}
\end{algorithm}

\subsection{Convergence Analysis}
\label{subsec:convergence}

We now analyze the convergence properties of OBI-MODEST for minimizing the composite objective $F_\epsilon(\tilde{\mathbf{S}}) = L_\epsilon(\tilde{\mathbf{S}}) + \eta\|\tilde{\mathbf{S}}\|_{2,1}$. Although the scalar logistic loss is convex, its composition with the magnitude of a complex linear map renders $L_\epsilon$ generally nonconvex with respect to $\tilde{\mathbf{S}}$. Consequently, global optimality cannot be guaranteed; instead, we establish monotone descent and convergence to a critical point, as formalized in the following Lemma~\ref{lem:descent} and Theorem~\ref{thm:conv_critical}.

\begin{lemma}
\label{lem:descent}
Let $\{\tilde{\mathbf{S}}^{(t)}\}_{t\ge 0}$ be the sequence generated by OBI-MODEST with step size $\mu \in (0, 1/L_{\mathrm{Lip}})$. Then, for all $t\ge 0$,
\begin{equation}
\label{eq:descent_ineq}
F_\epsilon(\tilde{\mathbf{S}}^{(t+1)}) \le F_\epsilon(\tilde{\mathbf{S}}^{(t)}) - \frac{1-\mu L_{\mathrm{Lip}}}{2\mu}\|\tilde{\mathbf{S}}^{(t+1)}-\tilde{\mathbf{S}}^{(t)}\|_F^2.
\end{equation}
In particular, $\{F_\epsilon(\tilde{\mathbf{S}}^{(t)})\}$ is monotonically nonincreasing.
\end{lemma}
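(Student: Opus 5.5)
The argument is the standard sufficient-decrease proof for proximal gradient with an $L$-smooth term, carried out in the real inner-product space $\mathbb{C}^{G\times P}\cong\mathbb{R}^{2GP}$ with $\langle\mathbf{A},\mathbf{B}\rangle=\Real\,\tr(\mathbf{A}^{\mathsf H}\mathbf{B})$. With the convention $\nabla_{\mathbf{Z}}f=2\,\partial f/\partial\mathbf{Z}^*$, the gradient in \eqref{eq:grad_L_eps} is exactly the real gradient under this identification. Hence Lemma~\ref{lem:Lipschitz} gives the descent lemma
\begin{equation*}
L_\epsilon(\tilde{\mathbf{S}}')\le L_\epsilon(\tilde{\mathbf{S}})+\langle\nabla L_\epsilon(\tilde{\mathbf{S}}),\tilde{\mathbf{S}}'-\tilde{\mathbf{S}}\rangle+\frac{L_{\mathrm{Lip}}}{2}\|\tilde{\mathbf{S}}'-\tilde{\mathbf{S}}\|_F^2 .
\end{equation*}
To prove it, integrate $\langle\nabla L_\epsilon(\tilde{\mathbf{S}}+s(\tilde{\mathbf{S}}'-\tilde{\mathbf{S}})),\tilde{\mathbf{S}}'-\tilde{\mathbf{S}}\rangle$ over $s\in[0,1]$ and apply Cauchy--Schwarz. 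This uses that $L_\epsilon$ is $C^1$, which holds because $|z|_\epsilon$ is smooth for $\epsilon>0$.

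Next, use the variational characterization of the proximal step \eqref{eq:pg_update}:
\begin{equation*}
\tilde{\mathbf{S}}^{(t+1)}=\arg\min_{\mathbf{X}}\ \eta\|\mathbf{X}\|_{2,1}+\frac{1}{2\mu}\|\mathbf{X}-\tilde{\mathbf{S}}^{(t)}+\mu\nabla L_\epsilon(\tilde{\mathbf{S}}^{(t)})\|_F^2 .
\end{equation*}
The objective here is $\tfrac1\mu$-strongly convex, because $\|\cdot\|_{2,1}$ is convex. Comparing its value at the minimizer with its value at $\mathbf{X}=\tilde{\mathbf{S}}^{(t)}$ and expanding the square gives
\begin{equation*}
\eta\|\tilde{\mathbf{S}}^{(t+1)}\|_{2,1}+\langle\nabla L_\epsilon(\tilde{\mathbf{S}}^{(t)}),\tilde{\mathbf{S}}^{(t+1)}-\tilde{\mathbf{S}}^{(t)}\rangle+\frac{1}{2\mu}\|\tilde{\mathbf{S}}^{(t+1)}-\tilde{\mathbf{S}}^{(t)}\|_F^2\le\eta\|\tilde{\mathbf{S}}^{(t)}\|_{2,1}.
\end{equation*}
The minimality comparison alone already yields the coefficient $\tfrac{1}{2\mu}$, which is what the stated bound needs. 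Strong convexity would improve it to $\tfrac1\mu$, but that is not required.

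Finally, add this inequality to the descent lemma with $\tilde{\mathbf{S}}=\tilde{\mathbf{S}}^{(t)}$ and $\tilde{\mathbf{S}}'=\tilde{\mathbf{S}}^{(t+1)}$. The inner-product terms cancel, leaving
\begin{equation*}
F_\epsilon(\tilde{\mathbf{S}}^{(t+1)})\le F_\epsilon(\tilde{\mathbf{S}}^{(t)})-\Big(\frac{1}{2\mu}-\frac{L_{\mathrm{Lip}}}{2}\Big)\|\tilde{\mathbf{S}}^{(t+1)}-\tilde{\mathbf{S}}^{(t)}\|_F^2,
\end{equation*}
which is \eqref{eq:descent_ineq}. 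Since $\mu<1/L_{\mathrm{Lip}}$, the coefficient is positive, so $F_\epsilon$ is monotonically nonincreasing along the iterates.

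The main point to check is consistency of the complex calculus. We must confirm that the Wirtinger gradient with the factor-2 convention, paired through $\Real\,\tr(\cdot)$, reproduces the real directional derivative: for a real-valued $f$, $df=2\Real\langle\partial f/\partial\mathbf{Z}^*,d\mathbf{Z}\rangle$. We must also confirm that the Lipschitz constant of Lemma~\ref{lem:Lipschitz} is measured in the Frobenius norm, so that the descent lemma applies verbatim. Everything else is routine.
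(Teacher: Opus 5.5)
Your proposal is correct and matches the paper's proof step for step: the paper also applies the descent lemma for the $L_{\mathrm{Lip}}$-Lipschitz gradient, writes the proximal step as the minimizer of the linearized model plus $\frac{1}{2\mu}\|\mathbf{U}-\tilde{\mathbf{S}}^{(t)}\|_F^2+\eta\|\mathbf{U}\|_{2,1}$, compares its value at $\mathbf{U}=\tilde{\mathbf{S}}^{(t)}$, and adds the two inequalities so that the inner-product terms cancel. Your extra checks are sound and fill in details the paper simply cites. These are the integral proof of the descent lemma, and the fact that the factor-2 Wirtinger gradient paired through $\Real\,\tr(\cdot)$ is the real gradient measured in the Frobenius norm.
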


To ensure the existence of accumulation points, we consider the initial sublevel set
\begin{equation}
\label{eq:sublevel_set}
\mathcal{S}_0 \triangleq \bigl\{\tilde{\mathbf{S}} \in \mathbb{C}^{G \times P} : F_\epsilon(\tilde{\mathbf{S}}) \le F_\epsilon(\tilde{\mathbf{S}}^{(0)})\bigr\}.
\end{equation}
By Lemma~\ref{lem:descent}, all iterations satisfy $\tilde{\mathbf{S}}^{(t)} \in \mathcal{S}_0$. The boundedness of $\mathcal{S}_0$ is guaranteed by the coercivity of the objective. Specifically, since $L_\epsilon(\tilde{\mathbf{S}}) \ge 0$, we have
\begin{align}
\label{eq:l21_coercive}
F_\epsilon(\tilde{\mathbf{S}}) 
&\ge \eta\|\tilde{\mathbf{S}}\|_{2,1} 
= \eta\sum_{g=1}^{G}\|\tilde{\mathbf{S}}_{g,:}\|_2 \nonumber\\
&\ge \eta \sqrt{\sum_{g=1}^{G}\|\tilde{\mathbf{S}}_{g,:}\|_2^2} 
= \eta \|\tilde{\mathbf{S}}\|_F,
\end{align}
which implies $F_\epsilon(\tilde{\mathbf{S}}) \to \infty$ as $\|\tilde{\mathbf{S}}\|_F \to \infty$. Thus, $\mathcal{S}_0$ is bounded.

\begin{theorem}
\label{thm:conv_critical}
Let $\{\tilde{\mathbf{S}}^{(t)}\}_{t\ge 0}$ be generated by OBI-MODEST with step size $\mu\in(0,1/L_{\mathrm{Lip}})$. Then, the successive differences vanish, i.e., $\|\tilde{\mathbf{S}}^{(t+1)}-\tilde{\mathbf{S}}^{(t)}\|_F\to 0$ as $t\to\infty$. The sequence admits at least one accumulation point, and every accumulation point $\tilde{\mathbf{S}}^\star$ satisfies the stationarity condition
\begin{equation}
\mathbf{0} \in \nabla L_\epsilon(\tilde{\mathbf{S}}^\star) + \partial\bigl(\eta\|\cdot\|_{2,1}\bigr)(\tilde{\mathbf{S}}^\star),
\end{equation}
meaning that $\tilde{\mathbf{S}}^\star$ is a critical point of $F_\epsilon$.
\end{theorem}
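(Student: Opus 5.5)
The plan is the standard proximal-gradient argument for a smooth-plus-convex objective, using Lemma~\ref{lem:descent} together with the boundedness of $\mathcal{S}_0$ established above. Throughout, $\mathbb{C}^{G\times P}$ is identified with $\mathbb{R}^{2GP}$ under the inner product $\langle\cdot,\cdot\rangle$, so that the Wirtinger gradient $\nabla L_\epsilon$ is the ordinary real gradient. The limiting subdifferential of the convex function $\eta\|\cdot\|_{2,1}$ then coincides with its convex subdifferential.

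\textbf{Step 1 (vanishing differences).} Sum the descent inequality \eqref{eq:descent_ineq} for $t=0,\dots,T-1$:
\begin{equation*}
\frac{1-\mu L_{\mathrm{Lip}}}{2\mu}\sum_{t=0}^{T-1}\|\tilde{\mathbf{S}}^{(t+1)}-\tilde{\mathbf{S}}^{(t)}\|_F^2 \le F_\epsilon(\tilde{\mathbf{S}}^{(0)}) - F_\epsilon(\tilde{\mathbf{S}}^{(T)}) \le F_\epsilon(\tilde{\mathbf{S}}^{(0)}),
\end{equation*}
where the last bound uses $F_\epsilon\ge 0$. Since $\mu<1/L_{\mathrm{Lip}}$, the coefficient on the left is positive. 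Hence the series of squared differences converges, and $\|\tilde{\mathbf{S}}^{(t+1)}-\tilde{\mathbf{S}}^{(t)}\|_F\to 0$.

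\textbf{Step 2 (existence of accumulation points).} By Lemma~\ref{lem:descent}, every iterate lies in $\mathcal{S}_0$. By \eqref{eq:l21_coercive}, $\mathcal{S}_0$ is contained in the Frobenius ball of radius $F_\epsilon(\tilde{\mathbf{S}}^{(0)})/\eta$, so it is bounded. The space is finite dimensional, so Bolzano--Weierstrass gives a convergent subsequence $\tilde{\mathbf{S}}^{(t_j)}\to\tilde{\mathbf{S}}^\star$. Also, $\mathcal{S}_0$ is closed because $F_\epsilon$ is continuous, so $\tilde{\mathbf{S}}^\star\in\mathcal{S}_0$.

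\textbf{Step 3 (stationarity).} Write the update \eqref{eq:pg_update} through the optimality condition of the prox:
\begin{equation*}
\frac{1}{\mu}\bigl(\tilde{\mathbf{S}}^{(t)}-\tilde{\mathbf{S}}^{(t+1)}\bigr)-\nabla L_\epsilon(\tilde{\mathbf{S}}^{(t)}) \in \partial\bigl(\eta\|\cdot\|_{2,1}\bigr)(\tilde{\mathbf{S}}^{(t+1)}).
\end{equation*}
Adding $\nabla L_\epsilon(\tilde{\mathbf{S}}^{(t+1)})$ to both sides shows that the element
\begin{equation*}
\mathbf{E}^{(t+1)} \triangleq \frac{1}{\mu}\bigl(\tilde{\mathbf{S}}^{(t)}-\tilde{\mathbf{S}}^{(t+1)}\bigr)+\nabla L_\epsilon(\tilde{\mathbf{S}}^{(t+1)})-\nabla L_\epsilon(\tilde{\mathbf{S}}^{(t)})
\end{equation*}
lies in $\nabla L_\epsilon(\tilde{\mathbf{S}}^{(t+1)})+\partial(\eta\|\cdot\|_{2,1})(\tilde{\mathbf{S}}^{(t+1)})$. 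Lemma~\ref{lem:Lipschitz} then gives
\begin{equation*}
\|\mathbf{E}^{(t+1)}\|_F\le\left(\frac{1}{\mu}+L_{\mathrm{Lip}}\right)\|\tilde{\mathbf{S}}^{(t+1)}-\tilde{\mathbf{S}}^{(t)}\|_F,
\end{equation*}
which tends to $0$ by Step~1.

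Now pass to the limit along the shifted subsequence $t_j+1$. This subsequence also converges to $\tilde{\mathbf{S}}^\star$, again by Step~1. The gradient $\nabla L_\epsilon$ is continuous, and the graph of the convex subdifferential of the continuous function $\eta\|\cdot\|_{2,1}$ is closed. Together these yield $\mathbf{0}\in\nabla L_\epsilon(\tilde{\mathbf{S}}^\star)+\partial(\eta\|\cdot\|_{2,1})(\tilde{\mathbf{S}}^\star)$.

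\textbf{Main obstacle.} The only delicate point is this limiting step. We need closedness of the subdifferential graph, or in the limiting-subdifferential framework, convergence of the function values $\eta\|\tilde{\mathbf{S}}^{(t_j+1)}\|_{2,1}\to\eta\|\tilde{\mathbf{S}}^\star\|_{2,1}$. Both follow directly from the continuity and convexity of the $\ell_{2,1}$ norm. I would invoke the convex subgradient inequality $\eta\|\mathbf{X}\|_{2,1}\ge\eta\|\tilde{\mathbf{S}}^{(t_j+1)}\|_{2,1}+\langle \mathbf{V}^{(t_j+1)},\mathbf{X}-\tilde{\mathbf{S}}^{(t_j+1)}\rangle$, valid for every $\mathbf{X}$, where $\mathbf{V}^{(t_j+1)}\triangleq\mathbf{E}^{(t_j+1)}-\nabla L_\epsilon(\tilde{\mathbf{S}}^{(t_j+1)})$ is the subgradient produced above. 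Letting $j\to\infty$ in this inequality gives $-\nabla L_\epsilon(\tilde{\mathbf{S}}^\star)\in\partial(\eta\|\cdot\|_{2,1})(\tilde{\mathbf{S}}^\star)$.
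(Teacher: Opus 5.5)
Your proposal is correct and follows essentially the same route as the paper: you sum the descent inequality of Lemma~\ref{lem:descent} to get vanishing differences, use the bounded and closed sublevel set $\mathcal{S}_0$ for existence of accumulation points, and pass to the limit in the proximal optimality condition using continuity of $\nabla L_\epsilon$ and closedness of the graph of the convex subdifferential of $\eta\|\cdot\|_{2,1}$. The only differences are cosmetic: you re-center the gradient at $\tilde{\mathbf{S}}^{(t+1)}$ to get an explicit element of $\partial F_\epsilon(\tilde{\mathbf{S}}^{(t+1)})$ with norm at most $(1/\mu+L_{\mathrm{Lip}})\|\tilde{\mathbf{S}}^{(t+1)}-\tilde{\mathbf{S}}^{(t)}\|_F$, and you prove graph-closedness explicitly, whereas the paper takes the limit directly with the gradient evaluated at $\tilde{\mathbf{S}}^{(t_k)}$ and cites closedness.
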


The proofs of Lemma~\ref{lem:descent} and Theorem~\ref{thm:conv_critical} are provided in Appendices~\ref{app:proof_descent} and~\ref{app:proof_convergence}, respectively.

\subsection{Computational Complexity}
\label{subsec:complexity}

We analyze the computational complexity of the proposed OBI-MODEST algorithm and compare it with three 
methods: CBIHT-$\ell_2$~\cite{stockle2015spawc}, ToyBar~\cite{wan2021dual}, and 
GESPAR~\cite{kim2014noncoherent}. Let $M$ denote the number of sensors, $G$ the grid size, 
$P$ the number of snapshots, and $K$ the number of sources.

The per-iteration cost of OBI-MODEST is dominated by the forward transform 
$\mathbf{Z} = \mathbf{A}(\Theta)\tilde{\mathbf{S}}$ and the adjoint transform 
$\mathbf{A}(\Theta)^{\mathsf{H}}(\mathbf{P}_\epsilon \odot \mathbf{W})$, each requiring 
$\mathcal{O}(MGP)$ complex multiply-add operations. The block soft-thresholding step in \eqref{eq:block_shrink} costs $\mathcal{O}(GP)$.
All other computations are element-wise, including the formation of $|\mathbf{Z}|_\epsilon$ in \eqref{eq:smoothed_mag_matrix}, $\mathbf{R}_\epsilon$ in \eqref{eq:residual_eps}, and $(\mathbf{P}_\epsilon,\mathbf{W})$ in \eqref{eq:P_eps_def}--\eqref{eq:W_def}, as well as the evaluation of $L_\epsilon$ in \eqref{eq:L_eps}; their total cost scales as $\mathcal{O}(MP)$, which is a  lower order compared with $\mathcal{O}(MGP)$.
Since the objective $F_\epsilon$ is nonconvex, we employ $N_{\mathrm{init}}$ independent random 
initializations and select the solution with the smallest objective value. If each run terminates 
within $T_{\max}$ iterations, the total complexity of the proposed algorithm is
\begin{equation}
\label{eq:complexity_ours}
\mathcal{O}\bigl(N_{\mathrm{init}} \cdot T_{\max} \cdot MGP\bigr).
\end{equation}

Each iteration of CBIHT-$\ell_2$~\cite{stockle2015spawc} is dominated by two matrix-vector multiplications: one is the forward computation $\mathbf{A}\hat{\mathbf{S}}$ and the other is the backward gradient step $\mathbf{A}^{\mathsf{H}}(\mathbf{Y}_{\mathrm{1bit}} - \operatorname{sign}(\mathbf{A}\hat{\mathbf{S}}))$, both costing $\mathcal{O}(MGP)$.
A key advantage of CBIHT-$\ell_2$ is that it requires no random restarts; instead, it uses a deterministic initialization via back-projection of the measurements, i.e., $\hat{\mathbf{S}}^{(0)} = \mathbf{A}^{\mathsf{H}}\mathbf{Y}_{\mathrm{1bit}}$.
Since it typically converges in a small number of iterations $T_C$, the total complexity is $\mathcal{O}(T_C \cdot MGP)$, which effectively scales as $\mathcal{O}(MGP)$.

ToyBar~\cite{wan2021dual} solves a majorization-minimization surrogate via proximal 
gradient with Nesterov acceleration. Its per-iteration structure is similar to the proposed algorithm, 
costing $\mathcal{O}(MGP)$. However, ToyBar also addresses a nonconvex problem and thus 
requires multiple random initializations. The total complexity is therefore 
$\mathcal{O}(N_T \cdot T_T \cdot MGP)$.

GESPAR~\cite{kim2014noncoherent} is a greedy algorithm that alternates between support 
refinement via local search and coefficient estimation via Damped Gauss-Newton (DGN). 
For multiple snapshots, the coefficients are estimated per column, requiring $P$ 
calls to DGN solver  per outer iteration. Each DGN step involves forming and solving a 
$(K + K_{\mathrm{ref}}) \times (K + K_{\mathrm{ref}})$ linear system, costing 
$\mathcal{O}(K^3)$, plus residual evaluations costing $\mathcal{O}(MK)$. The local search 
phase considers candidate support swaps by examining $\mathcal{O}(GK)$ index pairs. 
As multiple random initializations 
($N_G$ restarts) are  required, with a maximum of $T_G$ iterations per restart, 
the total complexity scales as
\begin{equation}
\mathcal{O}\bigl(N_G \cdot T_G \cdot (K^3 P + MKP + GK)\bigr).
\end{equation}
In practice, the $GK$ term from local search is often dominated by the $MKP$ term from 
residual evaluations when $M P \gg G$, which is typical in DOA estimation where many 
snapshots are collected. However, when $G$ is large or the number 
of snapshots is small, the local search overhead becomes significant.

Table~\ref{tab:complexity} summarizes the results. Although the proposed OBI-MODEST algorithm, ToyBar, and CBIHT-$\ell_2$ 
share the same $\mathcal{O}(MGP)$ per-iteration cost, their total complexities differ 
significantly due to the number of iterations and random restarts required. GESPAR incurs the highest cost due to its greedy search and 
per-snapshot DGN solver .

\begin{table}[t]
\centering
\caption{Computational Complexity Comparison}
\label{tab:complexity}
\footnotesize
\setlength{\tabcolsep}{3pt}
\begin{tabular}{l|c|c|c}
\toprule
\textbf{Method} & \textbf{Per-Iter.} & \textbf{Restarts} & \textbf{Total Complexity} \\
\midrule
OBI-MODEST & $\mathcal{O}(MGP)$ & $N_{\mathrm{init}}$ & 
$\mathcal{O}(N_{\mathrm{init}} T_{\max} MGP)$ \\
CBIHT-$\ell_2$~\cite{stockle2015spawc} & $\mathcal{O}(MGP)$ & 1 & 
$\mathcal{O}(T_C MGP)$ \\
ToyBar~\cite{wan2021dual} & $\mathcal{O}(MGP)$ & $N_T$ & 
$\mathcal{O}(N_T T_T MGP)$ \\
GESPAR~\cite{kim2014noncoherent} & $\mathcal{O}(K^3\!+\!MKP)$ & $N_G$ & 
$\mathcal{O}(N_G T_G (K^3\!+\!MK) P)$ \\
\bottomrule
\end{tabular}
\vspace{1mm}
\begin{flushleft}
\scriptsize
$N_T$, $N_G$: number of restarts for ToyBar and GESPAR, respectively.\\
$T_C$, $T_T$, $T_G$: number of iterations for CBIHT-$\ell_2$, ToyBar, and GESPAR.
\end{flushleft}
\end{table}

\section{Simulation Results}
\label{sec:simulation}

\subsection{Simulation Setup}
\label{subsec:sim_setup}

All simulations employ a 19-sensor UCA.
The angular search grid spans $[-90^\circ,90^\circ]$ with $361$ grid points.
The sensor phase errors are modeled by $\mathbf{D}=\diag(e^{j\phi_1},\ldots,e^{j\phi_M})$ with $\phi_m\sim\mathcal{N}(0,\sigma_\phi^2)$.
The signal-to-noise ratio (SNR) is defined as $\mathrm{SNR} = 10 \log_{10} ( \sigma_s^2 / \sigma_n^2 )$.

Since the algorithms under consideration are developed based on different signal models, we introduce their respective measurement formulations below.

The proposed method operates on the one-bit magnitude-only measurements with unknown phase errors, i.e., 
$\mathbf{Y}_{\mathrm{1bit}} = \operatorname{sign}\big(|\mathbf{D}\mathbf{A}(\Theta)\tilde{\mathbf{S}}| + \mathbf{N} - \mathbf{T}\big)$.
The coherent one-bit baselines (CBIHT-$\ell_2$~\cite{stockle2015spawc} and ROCS~\cite{li2024rocs}) quantize the complex baseband samples directly as $\mathbf{Y}^{\mathrm{coh}}_{\mathrm{1bit}}
=
\operatorname{sign}_{\mathbb{C}}\!\big(\mathbf{D}\mathbf{A}(\Theta)\tilde{\mathbf{S}}
+\mathbf{N}_{\mathrm{c}}\big)$,
where $\mathbf{N}_{\mathrm{c}} \in \mathbb{C}^{M \times P}$ represents complex noise, and $\operatorname{sign}_{\mathbb{C}}(\cdot) = \operatorname{sign}(\Re\{\cdot\}) + j\operatorname{sign}(\Im\{\cdot\})$.

For the unquantized benchmarks, GESPAR~\cite{kim2014noncoherent} utilizes intensity data $\mathbf{Y}_{\mathrm{int}} = |\mathbf{D}\mathbf{A}(\Theta)\tilde{\mathbf{S}} + \mathbf{N}_{\mathrm{c}}|^{2}$, while ToyBar~\cite{wan2021dual} employs magnitude data $\mathbf{Y}_{\mathrm{mag}} = |\mathbf{D}\mathbf{A}(\Theta)\tilde{\mathbf{S}}| + \mathbf{N}$. Finally, the unquantized coherent methods (MUSIC, one-bit MUSIC~\cite{huang2019one}, and Eigenstructure~\cite{liu2011eigenstructure}) operate on the standard measurements $\mathbf{Y}_{\mathrm{c}} = \mathbf{D}\mathbf{A}(\Theta)\tilde{\mathbf{S}} + \mathbf{N}_{\mathrm{c}}$.

For OBI-MODEST, we set the per-sensor threshold as 
$\tau_m = \mathrm{median}\bigl(\{[\mathbf{Y}]_{m,p}\}_{p=1}^{P}\bigr)$. The logistic sharpness and smoothing parameters are fixed at $\beta = 2$ and $\epsilon = 10^{-3}$, respectively.  The step size is set to a conservative value of $\mu = 0.25/L_{\mathrm{Lip}}$, where $L_{\mathrm{Lip}}$ is computed via~\eqref{eq:Lip_const}.
Regarding initialization, the starting point $\tilde{\mathbf{S}}^{(0)}$ is initialized with $\tilde{\mathbf{S}}^{(0)} \sim \mathcal{CN}(\mathbf{0}, 0.1^{2}\mathbf{I})$. The algorithm terminates when the relative change in the objective function falls below $\delta_{\mathrm{tol}} = 10^{-4}$ or the iteration count reaches $T_{\max} = 400$. Finally, to mitigate sensitivity to local minima, we employ $N_{\mathrm{init}} = 5$ independent random initializations and select the solution with the smallest final objective value.

For baseline methods, we use the empirically tuned parameters. CBIHT-$\ell_2$ runs for 10 iterations with deterministic initialization~\cite{stockle2015spawc}; ROCS uses the tanh-smoothing parameter
$c_{\mathrm{R}}=1$, proximal parameter $\mu_{\mathrm{R}} = 0.05$, step size $\eta_{\mathrm{R}} = 0.005$, and penalty $\gamma_{\mathrm{R}} = 4$, with 50 outer and 30 inner iterations~\cite{li2024rocs}; ToyBar and GESPAR use 5 random restarts with 200 iterations each.

Performance is evaluated by the root mean square error (RMSE) in degrees:
\begin{equation}
\mathrm{RMSE}
=
\sqrt{\frac{1}{JK}\sum_{j=1}^{J}\sum_{k=1}^{K}
\big(\hat{\theta}_{j,k}-\theta_k\big)^2},
\label{eq:rmse_def}
\end{equation}
where $J$ denotes the number of Monte Carlo trials and is chosen as 200 in corresponding simulations.

\subsection{Verification of Theoretical Analysis}
\label{subsec:theory_verification}

This subsection provides numerical validation of the theoretical results established in Sections~\ref{sec:problem_formulation} and~\ref{sec:proposed_method}, by considering $K=3$ sources at $\{-40^\circ, 0^\circ, 30.5^\circ\}$ with $P=100$ snapshots under $\mathrm{SNR}=20$~dB.

\subsubsection{Magnitude-Fitting versus Sign-Consistency}
\label{subsubsec:mag_vs_sign}

We first illustrate a fundamental mismatch between magnitude-fitting objectives and one-bit measurements, which motivates the proposed sign-consistency formulation.
Fix a true $K$-source coefficient matrix $\tilde{\mathbf{S}}_{\mathrm{true}}$ and form its noiseless magnitudes
$|\mathbf{A}\tilde{\mathbf{S}}_{\mathrm{true}}|$.
For each perturbation level $\delta\in[10^{-3},10^{1.5}]$, we generate
\begin{equation}
\tilde{\mathbf{S}}^{(\delta)}=\tilde{\mathbf{S}}_{\mathrm{true}}+\delta\,\Delta\tilde{\mathbf{S}},
\qquad
[\Delta\tilde{\mathbf{S}}]_{g,p}\sim\mathcal{CN}(0,1)\ \text{i.i.d.},
\end{equation}
and compute the corresponding one-bit outputs using the same thresholds
\begin{equation}
\mathbf{Y}_{\mathrm{1bit}}^{(\delta)}
=
\operatorname{sign}\!\Big(|\mathbf{A}\tilde{\mathbf{S}}^{(\delta)}|-\mathbf{T}\Big).
\end{equation}
We compare two metrics averaged over 200 random perturbations $\Delta\tilde{\mathbf{S}}$.
The first is the one-bit agreement rate
\begin{equation}
r(\delta) \triangleq \frac{1}{MP}\sum_{m,p}
\mathbf{1}\Big\{[\mathbf{Y}_{\mathrm{1bit}}^{(\delta)}]_{m,p}=
[\mathbf{Y}_{\mathrm{1bit}}^{\mathrm{true}}]_{m,p}\Big\},
\label{eq:r_delta}
\end{equation}
which directly measures sign consistency.
The second is the magnitude change
\begin{equation}
\|\Delta\mathbf{R}\|_F^2
\triangleq
\big\||\mathbf{A}\tilde{\mathbf{S}}^{(\delta)}|-|\mathbf{A}\tilde{\mathbf{S}}_{\mathrm{true}}|\big\|_F^2,
\label{eq:deltaR_def}
\end{equation}
which captures variations that are explicitly penalized by magnitude-fitting approaches (e.g., ToyBar), even when such variations do not alter the one-bit labels.

Fig.~\ref{fig:mag_vs_sign}(a) shows that small perturbations leave the one-bit labels nearly unchanged (e.g., $r(\delta)>0.96$ for $\delta<10^{-2}$), while $r(\delta)$ decays and eventually approaches $0.5$ as $\delta$ increases, corresponding to near-independent labels.
In contrast, Fig.~\ref{fig:mag_vs_sign}(b) demonstrates that $\|\Delta\mathbf{R}\|_F^2$ grows approximately on the order of $\delta^2$ even in the regime where $r(\delta)\approx 1$.
Therefore, magnitude-fitting objectives can still produce non-negligible descent directions that do not affect the one-bit measurements, potentially moving the iteration away from the  sign-consistent region.

By comparison, the proposed sign-consistency loss depends on whether the predicted magnitudes fall on the correct side of the thresholds.
Specifically, with $y_{m,p}\in\{+1,-1\}$, the desired condition for each entry is
\begin{equation}
y_{m,p}\Big(\big|[\mathbf{A}\tilde{\mathbf{S}}]_{m,p}\big|-\tau_m\Big) > 0.
\label{eq:threshold_side_condition}
\end{equation}
$\big|[\mathbf{A}\tilde{\mathbf{S}}]_{m,p}\big|>\tau_m$ when $y_{m,p}=+1$, and $\big|[\mathbf{A}\tilde{\mathbf{S}}]_{m,p}\big|<\tau_m$ when $y_{m,p}=-1$.
For the logistic loss $\ell(u)=\log(1+e^{-\beta u})$, one has $|\ell'(u)|=\beta/(1+e^{\beta u})$, which becomes negligible when \eqref{eq:threshold_side_condition} holds with a comfortable gap.
Consequently, the updates concentrate on entries whose predictions lie on the wrong side of the thresholds, rather than on magnitude variations that are invisible to the one-bit quantizer.

\begin{figure}[t]
\centering
\begin{minipage}[b]{0.48\columnwidth}
    \centering
    \includegraphics[width=\linewidth]{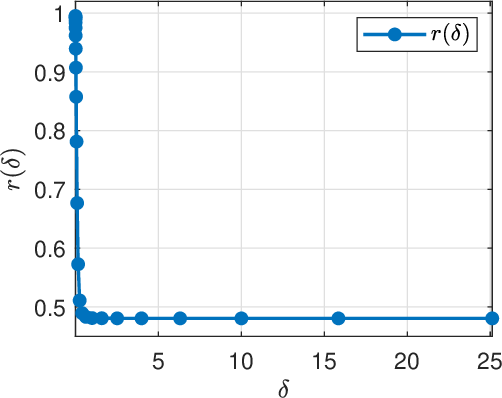}
    \centerline{\small (a)}
\end{minipage}
\hfill
\begin{minipage}[b]{0.48\columnwidth}
    \centering
    \includegraphics[width=\linewidth]{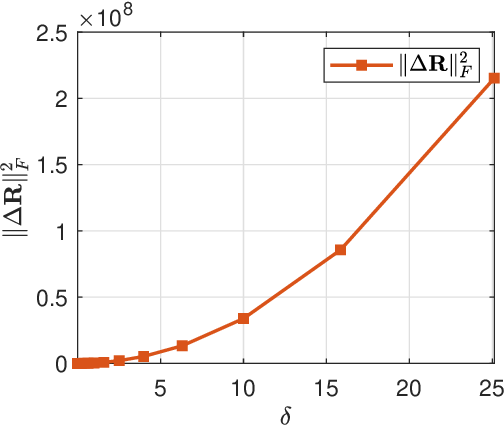}
    \centerline{\small (b)}
\end{minipage}
\caption{Mismatch between one-bit and magnitude objectives versus perturbation $\delta$:
(a) one-bit agreement rate $r(\delta)$;
(b) magnitude change $\|\Delta\mathbf{R}\|_F^2$.}
\label{fig:mag_vs_sign}
\end{figure}

\subsubsection{Convergence Behavior}
\label{subsubsec:convergence_exp}

We evaluate Lemma~\ref{lem:descent} and Theorem~\ref{thm:conv_critical} by tracking the objective
$F_\epsilon(\tilde{\mathbf{S}}^{(t)})$ together with three stationarity diagnostics:
\begin{align}
\|G_\mu^{(t)}\|_F
&=
\frac{1}{\mu}\big\|\tilde{\mathbf{S}}^{(t)}-\tilde{\mathbf{S}}^{(t+1)}\big\|_F,
\label{eq:pg_norm_def}\\
\kappa_{\max}^{(t)}
&=
\max_{g}\ \min_{\boldsymbol{\xi} \in \partial \|\tilde{\mathbf{S}}_{g,:}^{(t)}\|_2}
\big\| \nabla_g L_\epsilon(\tilde{\mathbf{S}}^{(t)}) + \eta \boldsymbol{\xi} \big\|_2,
\label{eq:kkt_def}\\
\delta_t
&=
\frac{\big\|\tilde{\mathbf{S}}^{(t)}-\tilde{\mathbf{S}}^{(t-1)}\big\|_F}{\big\|\tilde{\mathbf{S}}^{(t-1)}\big\|_F}.
\label{eq:delta_def}
\end{align}
Here $\|G_\mu^{(t)}\|_F$ is the normalized update step size,
$\kappa_{\max}^{(t)}$ measures the maximum  KKT residual for the objective
$L_\epsilon(\tilde{\mathbf{S}})+\eta\|\tilde{\mathbf{S}}\|_{2,1}$, and $\delta_t$ reports the relative iteration change.

Fig.~\ref{fig:convergence}(a) shows that $F_\epsilon(\tilde{\mathbf{S}}^{(t)})$ is monotonically nonincreasing over $1000$ iterations, which is consistent with the sufficient-decrease property in Lemma~\ref{lem:descent}.
In particular, Lemma~\ref{lem:descent} implies the descent inequality
\begin{equation}
F_\epsilon(\tilde{\mathbf{S}}^{(t)})-F_\epsilon(\tilde{\mathbf{S}}^{(t+1)})
\ \ge\ c\,\big\|\tilde{\mathbf{S}}^{(t+1)}-\tilde{\mathbf{S}}^{(t)}\big\|_F^2,
\label{eq:suff_decrease_discuss}
\end{equation}
for some constant $c>0$ determined by the stepsize condition, and thus 
$\|\tilde{\mathbf{S}}^{(t+1)}-\tilde{\mathbf{S}}^{(t)}\|_F\to 0$.

Fig.~\ref{fig:convergence}(b) further supports Theorem~\ref{thm:conv_critical}.
$\|G_\mu^{(t)}\|_F$ and $\delta_t$ decay to the $10^{-3}$--$10^{-4}$ level, indicating that successive updates vanish and the iterations stabilize,
while the concurrent reduction of $\kappa_{\max}^{(t)}$ indicates that the iterations approach a stationary point that approximately satisfies the
 KKT conditions associated with~\eqref{eq:main_obj_matrix}.

\begin{figure}[t]
\centering
\begin{minipage}[b]{0.48\columnwidth}
    \centering
    \includegraphics[width=\linewidth]{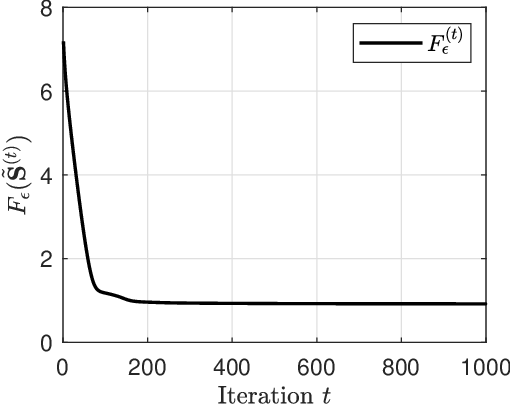}
    \centerline{\small (a)}
\end{minipage}
\hfill
\begin{minipage}[b]{0.48\columnwidth}
    \centering
    \includegraphics[width=\linewidth]{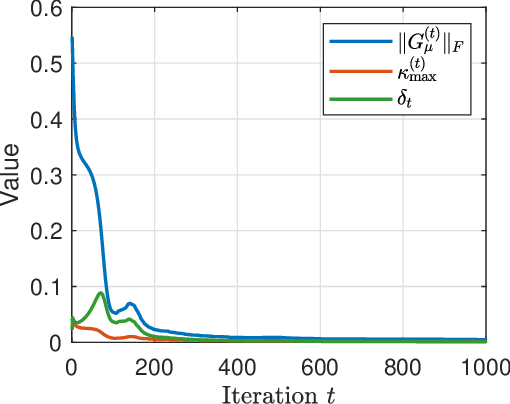}
    \centerline{\small (b)}
\end{minipage}
\caption{Convergence behavior:
(a) objective value $F_\epsilon(\tilde{\mathbf{S}}^{(t)})$ versus iteration;
(b) diagnostics $\|G_\mu^{(t)}\|_F$, $\kappa_{\max}^{(t)}$, and $\delta_t$.}
\label{fig:convergence}
\end{figure}

\subsection{Regularization Parameter Selection}
\label{subsec:param_selection}

The regularization parameter $\eta$ in~\eqref{eq:main_obj_matrix} serves to balance the logistic loss $L_\epsilon(\tilde{\mathbf{S}})$ against the row-sparsity penalty $\|\tilde{\mathbf{S}}\|_{2,1}$.
We examine the sensitivity of the algorithm to $\eta$ under varying SNR levels and snapshot numbers. The simulation setup involves $K=3$ sources located at $\{-40^\circ, 0^\circ, 30^\circ\}$ and
$\eta$ is selected via grid search.

Fig.~\ref{fig:eta_selection}(a) illustrates the RMSE as a function of $\eta$ for various SNRs with $P=100$, and
a characteristic U-shaped curve is observed. While small $\eta$ values lead to insufficient sparsity, excessively large $\eta$ values attenuate the signal of interest.
Notably, across the tested SNR range of $5$--$20$~dB, the empirical optimal $\eta$ falls consistently within the narrow interval $[0.023, 0.027]$. 

Fig.~\ref{fig:eta_selection}(b) reveals that the optimal $\eta$ is strongly dependent on the snapshot number $P$ (shown for $\mathrm{SNR}=15$~dB).
As $P$ increases from $50$ to $500$, the optimal parameter $\eta^\star$ decreases from $0.035$ to $0.012$.
To characterize this relationship, we analyze the product $\eta^\star\sqrt{P}$, which yields values of $0.247$, $0.250$, $0.255$, and $0.268$ for $P\in\{50, 100, 200, 500\}$, respectively.
The approximate constancy of this product motivates the adoption of the following scaling law:
\begin{equation}
\eta^\star \approx \frac{\alpha}{\sqrt{P}}, \quad \text{with } \alpha \approx 0.25.
\label{eq:eta_scaling}
\end{equation}

\begin{figure}[t]
\centering
\begin{minipage}[b]{0.48\columnwidth}
    \centering
    \includegraphics[width=\linewidth]{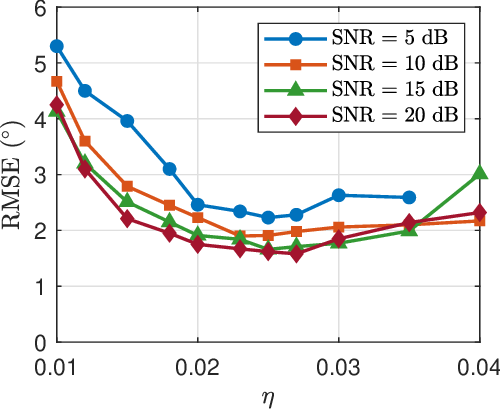}
    \centerline{\small (a)}
\end{minipage}
\hfill
\begin{minipage}[b]{0.48\columnwidth}
    \centering
    \includegraphics[width=\linewidth]{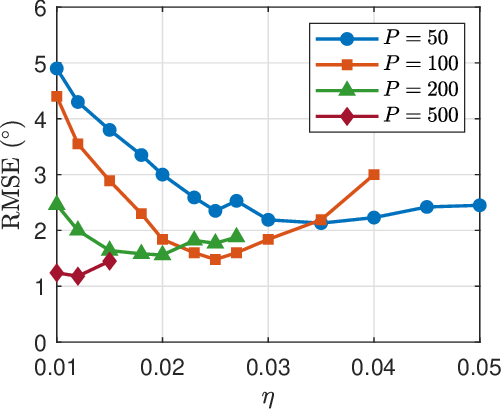}
    \centerline{\small (b)}
\end{minipage}
\caption{RMSE versus regularization parameter $\eta$:
(a) varying SNRs ($P=100$);
(b) varying snapshot counts $P$ ($\mathrm{SNR}=15$~dB).}
\label{fig:eta_selection}
\end{figure}

\subsection{DOA Estimation Performance}
\label{subsec:doa_estimation_performance}

We first evaluate the DOA estimation performance under ideal conditions without sensor phase errors, with $K = 3$ sources at $[-40.7^\circ, 0.8^\circ, 30.2^\circ]$.

\subsubsection{Effect of SNR}
\label{subsubsec:snr_effect}

Fig.~\ref{fig:snr_vs_rmse} presents the RMSE performance versus SNR ranging from 5~dB to 20~dB, with the snapshot number fixed at $P = 80$.

\begin{figure}[!t]
    \centering
    \includegraphics[width=0.85\columnwidth]{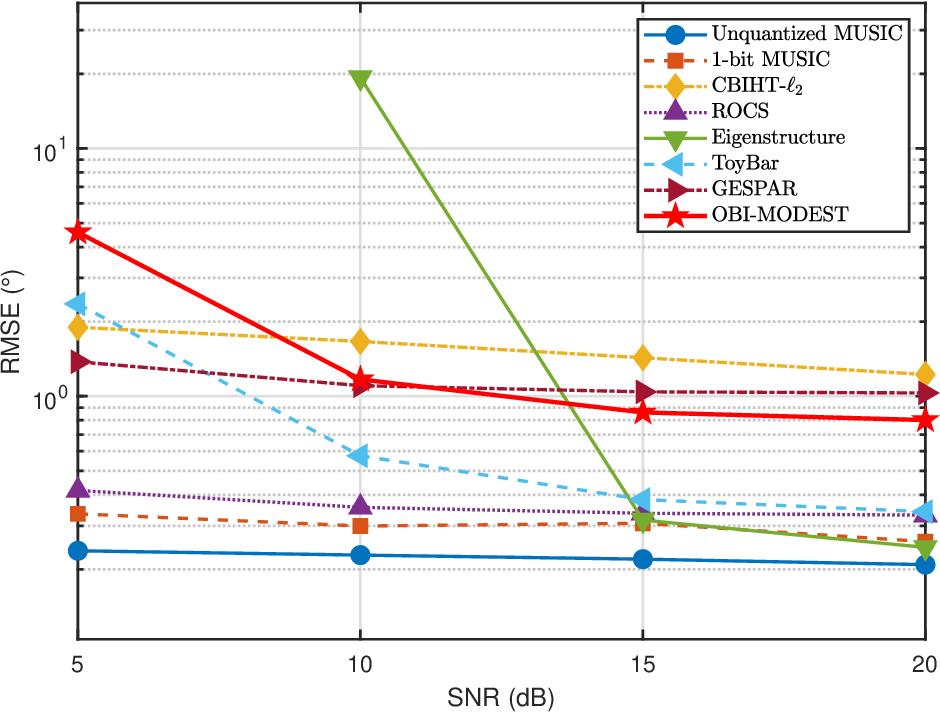}
    \caption{RMSE versus SNR with $P = 80$ snapshots and no phase errors.}
    \label{fig:snr_vs_rmse}
\end{figure}

The coherent benchmarks such as Unquantized MUSIC, 1-bit MUSIC, and ROCS benefit from the availability of phase information. Specifically, Unquantized MUSIC and 1-bit MUSIC maintain a low RMSE ($<0.4^\circ$) across the entire SNR range. However, interesting trends emerge among the magnitude-based/non-coherent methods.
The Eigenstructure method fails at low SNR ($5$--$10$~dB) due to its heavy reliance on high-quality statistical accumulation but converges rapidly to high accuracy at $20$~dB.
ToyBar utilizes unquantized magnitude measurement, and exhibits significant improvement as SNR increases, clearly outperforming the 1-bit coherent method CBIHT-$\ell_2$ at $\mathrm{SNR} \ge 10$~dB. This suggests that precise amplitude information can sometimes compensate for the lack of phase information, whereas the convex relaxation in CBIHT-$\ell_2$ limits its resolution at higher SNRs.

The proposed method demonstrates a moderate performance. Although it starts with a larger error at $5$~dB due to severe information loss inherent in 1-bit magnitude quantization, its RMSE drops sharply to $\approx 1.1^\circ$ at $10$~dB.
Crucially, for $\mathrm{SNR} \ge 10$~dB, the proposed method surpasses the coherent 1-bit baseline CBIHT-$\ell_2$.
Furthermore, despite using only coarse 1-bit magnitude measurements, our method achieves an accuracy comparable to, and at $20$~dB slightly superior to GESPAR, which utilizes full-precision magnitude measurements.
This indicates that the proposed method effectively mitigating the effect of information deficit caused by extreme quantization and phase loss.

\subsubsection{Effect of Number of Snapshots}
\label{subsubsec:snapshot_effect}

Fig.~\ref{fig:snapshots_vs_rmse} shows the RMSE versus the snapshot number $P\in\{20,50,80,100\}$ at $\mathrm{SNR}=15$~dB.

\begin{figure}[!t]
    \centering
    \includegraphics[width=0.85\columnwidth]{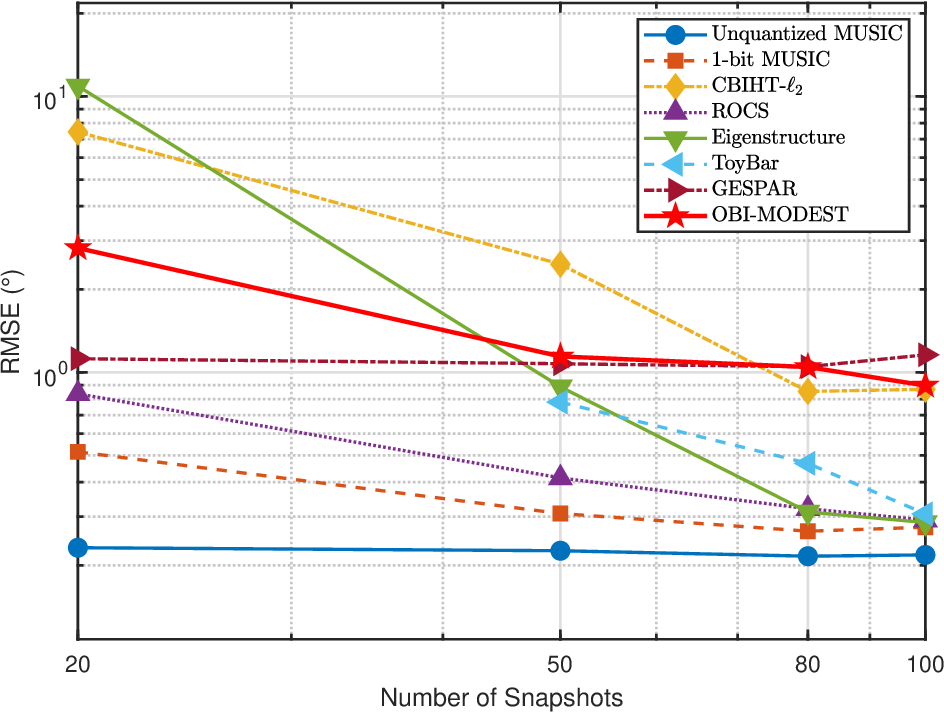}
    \caption{RMSE versus number of snapshots at $\mathrm{SNR}=15$~dB with no phase errors.}
    \label{fig:snapshots_vs_rmse}
\end{figure}

The Unquantized MUSIC has the best performance, while 1-bit MUSIC and ROCS gradually improve as $P$ increases.
In contrast, CBIHT-$\ell_2$ shows significant sensitivity to limited data, with its performance deteriorating to an RMSE above $7^\circ$ at $P=20$ and becomes competitive only when $P$ is sufficiently large, indicating a higher sample requirement for its convex recovery step in this configuration.

The Eigenstructure method performs poorly at $P=20$ but improves sharply once $P$ increases, consistent with the fact that its second-order-statistics construction is unreliable under limited averaging.
ToyBar does not return a valid estimate at $P=20$ and becomes effective only for $P\ge 50$, after which its RMSE decreases rapidly with $P$, reflecting the difficulty of nonconvex magnitude fitting.
GESPAR is comparatively stable across $P$ but exhibits weaker dependence on $P$.

The proposed method provides a meaningful estimate at $P=20$ with an RMSE of about $2.8^\circ$, where ToyBar fails completely and both CBIHT-$\ell_2$ and Eigenstructure incur significant errors.

\subsection{Phase Error Robustness}
\label{subsec:phase_error_robustness}

A primary advantage for adopting one-bit magnitude-only measurements is their robustness against phase errors. 
To validate this advantage numerically,  we fix $\mathrm{SNR}=15$~dB and $P=80$ snapshots, and vary the phase-error standard deviation
$\sigma_\phi$ from $0^\circ$ to $90^\circ$.

Fig.~\ref{fig:phase_error} plots the RMSE result versus $\sigma_\phi$, which reveals a distinct crossover in performance.
In the regime of small phase errors ($\sigma_\phi < 15^\circ$), coherent baselines (unquantized MUSIC, 1-bit MUSIC, and ROCS) indeed exhibit superior resolution compared to the proposed method, benefiting from relatively accurate  phase information.
However, as $\sigma_\phi$ increases, their performance degrades rapidly.
Notably, when $\sigma_\phi$ exceeds $30^\circ$, even unquantized MUSIC, which yields the lowest RMSE in the ideal case, deteriorates sufficiently to be outperformed by our proposed method. 

In  contrast, the proposed method exhibits a flat RMSE curve across the entire range of $\sigma_\phi$.
Remarkably, despite utilizing only coarse 1-bit magnitude quantization, our method achieves a performance comparable to that of GESPAR. This highlights a fundamental trade-off where coherent methods offer higher precision under perfect calibration, whereas the proposed approach ensures  robustness to phase errors while enabling a low-cost implementation. 

\begin{figure}[!t]
    \centering
    \includegraphics[width=0.85\columnwidth]{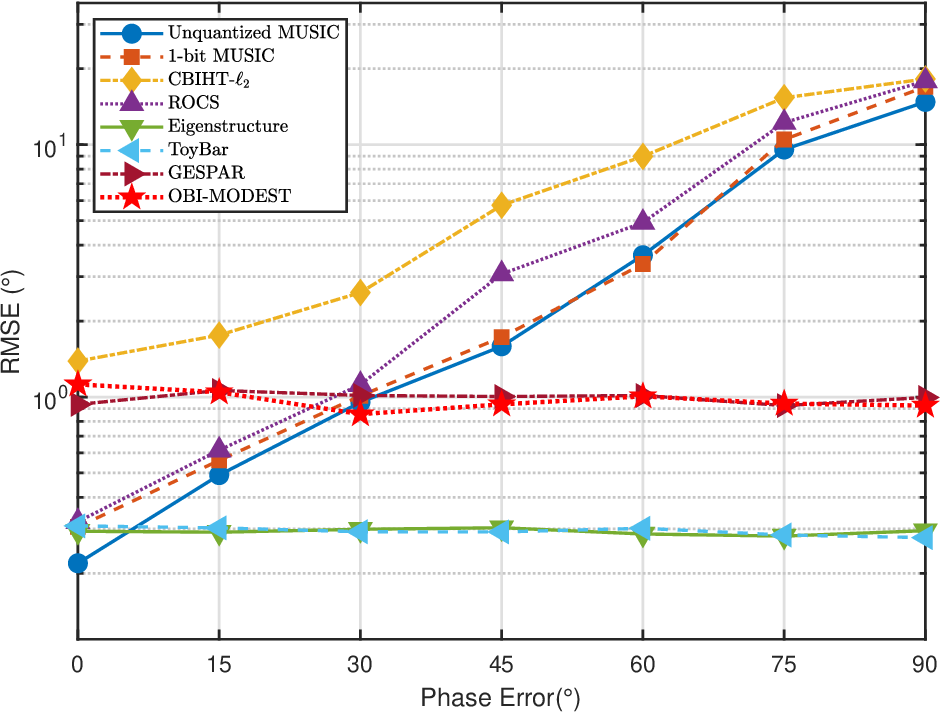}
    \caption{RMSE versus phase error standard deviation $\sigma_\phi$ at $\mathrm{SNR} = 15$~dB with $P = 80$ snapshots.}
    \label{fig:phase_error}
\end{figure}
\section{Conclusion}
\label{sec:conclusion}

The problem of DOA estimation from one-bit magnitude-only measurements has been addressed. Distinct from coherent estimators that degrade under phase errors and conventional non-coherent methods that rely on full-precision amplitude data, our proposed method formulates the problem as a sign-consistency optimization task using a smooth logistic surrogate to directly exploit binary threshold information. A proximal-gradient algorithm is developed with proven convergence to a critical point, specifically designed to handle the dual challenges of phase-less sensing and one-bit quantization. Numerical results demonstrate that the proposed method effectively bridges the gap between robustness and efficiency, and achieves accuracy comparable to coherent baselines under ideal conditions while delivering superior stability under severe phase errors, thereby establishing a practical solution for low-cost, uncalibrated array systems.

\appendices
\section{Proof of Lemma~\ref{lem:local_constancy}}
\label{app:proof_local_constancy}

\begin{proof}
By the non-degeneracy assumption in \eqref{eq:margin_condition}, the quantity $\bigl|[\mathbf{A}(\Theta)\tilde{\mathbf{S}}_0]_{m,p}\bigr|-\tau_m$ is nonzero for all $(m,p)$.
Recall the definition of the minimum decision margin $\rho$ from \eqref{eq:rho_def}:
\begin{equation}
\label{eq:rho_app}
\rho = \min_{m,p}
\left|\bigl|[\mathbf{A}(\Theta)\tilde{\mathbf{S}}_0]_{m,p}\bigr|-\tau_m\right| > 0.
\end{equation}

Now, assume that the magnitude perturbation satisfies \eqref{eq:local_delta_condition}, i.e.,
\begin{equation}
\bigl\||\mathbf{A}(\Theta)\tilde{\mathbf{S}}|-|\mathbf{A}(\Theta)\tilde{\mathbf{S}}_0|\bigr\|_\infty < \rho.
\end{equation}
For any index pair $(m,p)$, define the signed distance to the threshold for the unperturbed and perturbed signals, respectively, as:
\begin{align}
u_0 &\triangleq \bigl|[\mathbf{A}(\Theta)\tilde{\mathbf{S}}_0]_{m,p}\bigr| - \tau_m, \\
u   &\triangleq \bigl|[\mathbf{A}(\Theta)\tilde{\mathbf{S}}]_{m,p}\bigr| - \tau_m.
\end{align}
By the definition of $\rho$, we have $|u_0| \ge \rho$.
The perturbation condition implies that
\begin{equation}
|u - u_0| = \left| \bigl|[\mathbf{A}(\Theta)\tilde{\mathbf{S}}]_{m,p}\bigr| - \bigl|[\mathbf{A}(\Theta)\tilde{\mathbf{S}}_0]_{m,p}\bigr| \right| < \rho.
\end{equation}
Since $|u - u_0| < \rho \le |u_0|$, the value $u$ cannot cross zero; hence, $u$ must have the same sign as $u_0$.
Consequently,
\begin{equation}
\operatorname{sign}(u) = \operatorname{sign}(u_0) \implies [\widehat{\mathbf{Y}}_{\mathrm{1bit}}(\tilde{\mathbf{S}})]_{m,p} = [\widehat{\mathbf{Y}}_{\mathrm{1bit}}(\tilde{\mathbf{S}}_0)]_{m,p}.
\end{equation}
Since this holds for all $(m,p)$, we conclude that $\widehat{\mathbf{Y}}_{\mathrm{1bit}}(\tilde{\mathbf{S}})=\widehat{\mathbf{Y}}_{\mathrm{1bit}}(\tilde{\mathbf{S}}_0)$.

To prove the sufficient condition in the variable domain, we utilize the reverse triangle inequality $\bigl| |a|-|b| \bigr| \le |a-b|$ and standard norm inequalities:
\begin{align}
&\bigl\||\mathbf{A}(\Theta)\tilde{\mathbf{S}}|-|\mathbf{A}(\Theta)\tilde{\mathbf{S}}_0|\bigr\|_\infty \nonumber\\
&\quad \le \|\mathbf{A}(\Theta)(\tilde{\mathbf{S}}-\tilde{\mathbf{S}}_0)\|_\infty \nonumber\\
&\quad \le \|\mathbf{A}(\Theta)(\tilde{\mathbf{S}}-\tilde{\mathbf{S}}_0)\|_F \nonumber\\
&\quad \le \|\mathbf{A}(\Theta)\|_2\,\|\tilde{\mathbf{S}}-\tilde{\mathbf{S}}_0\|_F.
\label{eq:norm_chain_app}
\end{align}
Therefore, if $\|\tilde{\mathbf{S}}-\tilde{\mathbf{S}}_0\|_F < \rho / \|\mathbf{A}(\Theta)\|_2$, the right-hand side of \eqref{eq:norm_chain_app} is strictly less than $\rho$, which implies that \eqref{eq:local_delta_condition} holds. This completes the proof.
\end{proof}

\section{Proof of Lemma~\ref{lem:Lipschitz}}
\label{app:proof_Lipschitz}

Recall from \eqref{eq:grad_L_eps} that
\begin{equation}
\label{eq:grad_recall_app}
\nabla L_\epsilon(\tilde{\mathbf{S}})
=\frac{1}{MP}\mathbf{A}(\Theta)^{\mathsf H}
\big(\mathbf{P}_\epsilon(\tilde{\mathbf{S}})\odot \mathbf{W}(\tilde{\mathbf{S}})\big).
\end{equation}
Take any $\tilde{\mathbf{S}}_1,\tilde{\mathbf{S}}_2\in\mathbb{C}^{G\times P}$ and denote
$\mathbf{Z}_i=\mathbf{A}(\Theta)\tilde{\mathbf{S}}_i$ for $i=1,2$.
Then
\begin{align}
&\|\nabla L_\epsilon(\tilde{\mathbf{S}}_1)-\nabla L_\epsilon(\tilde{\mathbf{S}}_2)\|_F \nonumber\\
&\quad =\frac{1}{MP}\Big\|\mathbf{A}(\Theta)^{\mathsf H}
\big(\mathbf{P}_{\epsilon,1}\odot\mathbf{W}_1-\mathbf{P}_{\epsilon,2}\odot\mathbf{W}_2\big)\Big\|_F \nonumber\\
&\quad \le \frac{\|\mathbf{A}(\Theta)\|_2}{MP}\,
\big\|\mathbf{P}_{\epsilon,1}\odot\mathbf{W}_1-\mathbf{P}_{\epsilon,2}\odot\mathbf{W}_2\big\|_F,
\label{eq:lip_step1}
\end{align}
where $\mathbf{P}_{\epsilon,i} = \mathbf{P}_\epsilon(\tilde{\mathbf{S}}_i)$ and
$\mathbf{W}_i = \mathbf{W}(\tilde{\mathbf{S}}_i)$.

Fix an entry $(m,p)$ and write $z_i=[\mathbf{Z}_i]_{m,p}$.
Let $u_i = |z_i|_\epsilon=\sqrt{|z_i|^2+\epsilon^2}$ and
\begin{equation}
\label{eq:pw_def_app}
p_i = \frac{z_i}{u_i},\qquad
w_i = -\beta\,y_{m,p}\,\sigma\!\big(-\beta\,y_{m,p}(u_i-\tau_m)\big),
\end{equation}
so that $[\mathbf{P}_{\epsilon,i}]_{m,p}=p_i$ and $[\mathbf{W}_i]_{m,p}=w_i$.
Then
\begin{equation}
\label{eq:pw_expand_app}
p_1w_1-p_2w_2 = p_1(w_1-w_2) + w_2(p_1-p_2),
\end{equation}
and hence
\begin{equation}
|p_1w_1-p_2w_2|\le |p_1|\,|w_1-w_2| + |w_2|\,|p_1-p_2|.
\label{eq:lip_scalar_split}
\end{equation}

First, $|p_1|\le 1$ since $u_1\ge |z_1|$, and $|w_2|\le \beta$ since $0<\sigma(\cdot)<1$.
Moreover, the sigmoid satisfies $\sigma(t)(1-\sigma(t))\le 1/4$, so the scalar function
$r\mapsto -\beta\,\sigma(-\beta r)$ is $(\beta^2/4)$-Lipschitz. Therefore,
\begin{align}
|w_1-w_2|
&\le \frac{\beta^2}{4}\,|u_1-u_2| \nonumber\\
&\le \frac{\beta^2}{4}\,\bigl||z_1|-|z_2|\bigr| 
\le \frac{\beta^2}{4}\,|z_1-z_2|,
\label{eq:w_lip}
\end{align}
where we used $|u_1-u_2| = \bigl||z_1|_\epsilon-|z_2|_\epsilon\bigr| \le \bigl||z_1|-|z_2|\bigr|$
(since $|\cdot|_\epsilon$ is $1$-Lipschitz in $|\cdot|$)
and the reverse triangle inequality $\bigl||z_1|-|z_2|\bigr| \le |z_1-z_2|$.

Second, the map $g(z) = z/\sqrt{|z|^2+\epsilon^2}$ is $(1/\epsilon)$-Lipschitz, i.e.,
\begin{equation}
|p_1-p_2| = |g(z_1)-g(z_2)| \le \frac{1}{\epsilon}\,|z_1-z_2|.
\label{eq:p_lip}
\end{equation}
Indeed, $g$ is smooth on $\mathbb{R}^2$ and its Jacobian spectral norm is bounded by $1/\epsilon$
because $\sqrt{|z|^2+\epsilon^2}\ge \epsilon$.

Combining \eqref{eq:lip_scalar_split}--\eqref{eq:p_lip} yields
\begin{equation}
|p_1w_1-p_2w_2|\le\left(\frac{\beta^2}{4}+\frac{\beta}{\epsilon}\right)|z_1-z_2|.
\label{eq:scalar_lip_final}
\end{equation}

Applying \eqref{eq:scalar_lip_final} and summing over $(m,p)$ gives
\begin{align}
&\big\|\mathbf{P}_{\epsilon,1}\odot\mathbf{W}_1-\mathbf{P}_{\epsilon,2}\odot\mathbf{W}_2\big\|_F \nonumber\\
&\quad \le\left(\frac{\beta^2}{4}+\frac{\beta}{\epsilon}\right)\|\mathbf{Z}_1-\mathbf{Z}_2\|_F.
\label{eq:lip_step2}
\end{align}
Finally, $\|\mathbf{Z}_1-\mathbf{Z}_2\|_F=\|\mathbf{A}(\Theta)(\tilde{\mathbf{S}}_1-\tilde{\mathbf{S}}_2)\|_F
\le \|\mathbf{A}(\Theta)\|_2\|\tilde{\mathbf{S}}_1-\tilde{\mathbf{S}}_2\|_F$.
Substituting into \eqref{eq:lip_step1}--\eqref{eq:lip_step2} yields
\begin{align}
&\|\nabla L_\epsilon(\tilde{\mathbf{S}}_1)-\nabla L_\epsilon(\tilde{\mathbf{S}}_2)\|_F \nonumber\\
&\quad\le
\frac{\beta\|\mathbf{A}(\Theta)\|_2^2}{MP}\left(\frac{\beta}{4}+\frac{1}{\epsilon}\right)
\|\tilde{\mathbf{S}}_1-\tilde{\mathbf{S}}_2\|_F,
\label{eq:lip_final_app}
\end{align}
which proves the claim. \hfill$\square$

\section{Proof of Lemma~\ref{lem:smoothing_error}}
\label{app:proof_smoothing_error}

Let $\ell(u) = \log(1+e^{-\beta u})$. Since
$\ell'(u)=-\beta\sigma(-\beta u)$ and $0<\sigma(\cdot)<1$, we have
$|\ell'(u)|\le \beta$, i.e., $\ell(\cdot)$ is $\beta$-Lipschitz.

Fix any $(m,p)$ and denote
$a = \big|[\mathbf{A}(\Theta)\tilde{\mathbf{S}}]_{m,p}\big|\ge 0$ and
$a_\epsilon = \sqrt{a^2+\epsilon^2}$.
Then
\begin{equation}
\label{eq:smooth_bound_app}
0\le a_\epsilon-a
=\sqrt{a^2+\epsilon^2}-a
\le \epsilon,
\end{equation}
where the last inequality follows from $\sqrt{a^2+\epsilon^2}\le a+\epsilon$.

Using the $\beta$-Lipschitz property and $|[\mathbf{Y}_{\mathrm{1bit}}]_{m,p}|=1$,
\begin{align}
&\Big|\ell\!\Big([\mathbf{Y}_{\mathrm{1bit}}]_{m,p}(a_\epsilon-\tau_m)\Big)
-\ell\!\Big([\mathbf{Y}_{\mathrm{1bit}}]_{m,p}(a-\tau_m)\Big)\Big| \nonumber\\
&\quad\le \beta\,|a_\epsilon-a|
\le \beta\epsilon.
\label{eq:ell_diff_app}
\end{align}
Averaging over all $(m,p)$ yields
$|L_\epsilon(\tilde{\mathbf{S}})-L(\tilde{\mathbf{S}})|\le \beta\epsilon$.
\hfill$\square$

\section{Proof of Lemma~\ref{lem:descent}}
\label{app:proof_descent}

Since $\nabla L_\epsilon$ is $L_{\mathrm{Lip}}$-Lipschitz continuous, the descent lemma
implies for any $\mathbf{U},\mathbf{V}\in\mathbb{C}^{G\times P}$~\cite{beck2009fista}:
\begin{align}
L_\epsilon(\mathbf{U})
&\le
L_\epsilon(\mathbf{V})
+\operatorname{Re}\langle \nabla L_\epsilon(\mathbf{V}),\mathbf{U}-\mathbf{V}\rangle \nonumber\\
&\quad +\frac{L_{\mathrm{Lip}}}{2}\|\mathbf{U}-\mathbf{V}\|_F^2.
\label{eq:quad_upper_app}
\end{align}

The proximal-gradient update \eqref{eq:pg_update} is equivalently written as
\begin{align}
\tilde{\mathbf{S}}^{(t+1)}
&\in
\arg\min_{\mathbf{U}}\Big\{
\operatorname{Re}\langle \nabla L_\epsilon(\tilde{\mathbf{S}}^{(t)}),\mathbf{U}-\tilde{\mathbf{S}}^{(t)}\rangle \nonumber\\
&\qquad\qquad +\frac{1}{2\mu}\|\mathbf{U}-\tilde{\mathbf{S}}^{(t)}\|_F^2
+\eta\|\mathbf{U}\|_{2,1}
\Big\}.
\label{eq:prox_def_app}
\end{align}
By optimality of $\tilde{\mathbf{S}}^{(t+1)}$, evaluating the minimized objective at
$\mathbf{U}=\tilde{\mathbf{S}}^{(t)}$ yields
\begin{align}
&\operatorname{Re}\langle \nabla L_\epsilon(\tilde{\mathbf{S}}^{(t)}),\tilde{\mathbf{S}}^{(t+1)}-\tilde{\mathbf{S}}^{(t)}\rangle \nonumber\\
&\quad +\frac{1}{2\mu}\|\tilde{\mathbf{S}}^{(t+1)}-\tilde{\mathbf{S}}^{(t)}\|_F^2
+\eta\|\tilde{\mathbf{S}}^{(t+1)}\|_{2,1}
\le
\eta\|\tilde{\mathbf{S}}^{(t)}\|_{2,1}.
\label{eq:prox_ineq_app}
\end{align}

Applying \eqref{eq:quad_upper_app} with $\mathbf{V}=\tilde{\mathbf{S}}^{(t)}$ and
$\mathbf{U}=\tilde{\mathbf{S}}^{(t+1)}$ gives
\begin{align}
L_\epsilon(\tilde{\mathbf{S}}^{(t+1)})
&\le
L_\epsilon(\tilde{\mathbf{S}}^{(t)}) \nonumber\\
&\quad +\operatorname{Re}\langle \nabla L_\epsilon(\tilde{\mathbf{S}}^{(t)}),
\tilde{\mathbf{S}}^{(t+1)}-\tilde{\mathbf{S}}^{(t)}\rangle \nonumber\\
&\quad +\frac{L_{\mathrm{Lip}}}{2}\|\tilde{\mathbf{S}}^{(t+1)}-\tilde{\mathbf{S}}^{(t)}\|_F^2.
\label{eq:descent_step1_app}
\end{align}
Combining \eqref{eq:descent_step1_app} with \eqref{eq:prox_ineq_app} yields
\begin{align}
F_\epsilon(\tilde{\mathbf{S}}^{(t+1)})
&=L_\epsilon(\tilde{\mathbf{S}}^{(t+1)})+\eta\|\tilde{\mathbf{S}}^{(t+1)}\|_{2,1} \nonumber\\
&\le
L_\epsilon(\tilde{\mathbf{S}}^{(t)})+\eta\|\tilde{\mathbf{S}}^{(t)}\|_{2,1} \nonumber\\
&\quad -\left(\frac{1}{2\mu}-\frac{L_{\mathrm{Lip}}}{2}\right)
\|\tilde{\mathbf{S}}^{(t+1)}-\tilde{\mathbf{S}}^{(t)}\|_F^2 \nonumber\\
&=
F_\epsilon(\tilde{\mathbf{S}}^{(t)})
-\frac{1-\mu L_{\mathrm{Lip}}}{2\mu}\,
\|\tilde{\mathbf{S}}^{(t+1)}-\tilde{\mathbf{S}}^{(t)}\|_F^2.
\label{eq:descent_final_app}
\end{align}
If $\mu\in(0,1/L_{\mathrm{Lip}})$, then $1-\mu L_{\mathrm{Lip}}>0$, proving \eqref{eq:descent_ineq}.
\hfill$\square$

\section{Proof of Theorem~\ref{thm:conv_critical}}
\label{app:proof_convergence}

By Lemma~\ref{lem:descent}, $\{F_\epsilon(\tilde{\mathbf{S}}^{(t)})\}$ is monotonically nonincreasing.
Since $F_\epsilon(\tilde{\mathbf{S}})\ge 0$, the limit
$F_\epsilon^\infty = \lim_{t\to\infty}F_\epsilon(\tilde{\mathbf{S}}^{(t)})$ exists.

Summing \eqref{eq:descent_ineq} over $t=0,\ldots,T-1$ yields
\begin{align}
&\sum_{t=0}^{T-1}\|\tilde{\mathbf{S}}^{(t+1)}-\tilde{\mathbf{S}}^{(t)}\|_F^2 \nonumber\\
&\quad \le
\frac{2\mu}{1-\mu L_{\mathrm{Lip}}}
\Big(F_\epsilon(\tilde{\mathbf{S}}^{(0)})-F_\epsilon(\tilde{\mathbf{S}}^{(T)})\Big).
\label{eq:sum_descent_app}
\end{align}
Letting $T\to\infty$ gives
\begin{align}
&\sum_{t=0}^{\infty}\|\tilde{\mathbf{S}}^{(t+1)}-\tilde{\mathbf{S}}^{(t)}\|_F^2 \nonumber\\
&\quad \le
\frac{2\mu}{1-\mu L_{\mathrm{Lip}}}
\Big(F_\epsilon(\tilde{\mathbf{S}}^{(0)})-F_\epsilon^\infty\Big)
<\infty,
\label{eq:sum_infty_app}
\end{align}
which implies $\|\tilde{\mathbf{S}}^{(t+1)}-\tilde{\mathbf{S}}^{(t)}\|_F\to 0$.

All iterations lie in the initial sublevel set
$\mathcal{S}_0=\{\tilde{\mathbf{S}}:F_\epsilon(\tilde{\mathbf{S}})\le F_\epsilon(\tilde{\mathbf{S}}^{(0)})\}$.
As shown in Section~\ref{subsec:convergence}, $\mathcal{S}_0$ is closed and bounded, and hence compact.
Therefore, $\{\tilde{\mathbf{S}}^{(t)}\}$ admits at least one accumulation point:
there exists a subsequence $\{\tilde{\mathbf{S}}^{(t_k)}\}$ such that
$\tilde{\mathbf{S}}^{(t_k)}\to \tilde{\mathbf{S}}^\star$.
Moreover, since $\|\tilde{\mathbf{S}}^{(t+1)}-\tilde{\mathbf{S}}^{(t)}\|_F\to 0$, we also have
$\tilde{\mathbf{S}}^{(t_k+1)}\to \tilde{\mathbf{S}}^\star$.

Next we show that any accumulation point is a critical point of $F_\epsilon$.
By the optimality condition of the proximal step \eqref{eq:pg_update},
\begin{align}
\mathbf{0}&\in
\nabla L_\epsilon(\tilde{\mathbf{S}}^{(t)})+
\frac{1}{\mu}\big(\tilde{\mathbf{S}}^{(t+1)}-\tilde{\mathbf{S}}^{(t)}\big) \nonumber\\
&\quad +\eta\,\partial\|\cdot\|_{2,1}\big(\tilde{\mathbf{S}}^{(t+1)}\big),
\label{eq:opt_cond_pg}
\end{align}
where $\partial\|\cdot\|_{2,1}$ denotes the convex subdifferential.
Equivalently, there exists
$\mathbf{V}^{(t+1)}\in\partial\|\cdot\|_{2,1}(\tilde{\mathbf{S}}^{(t+1)})$ such that
\begin{align}
&\nabla L_\epsilon(\tilde{\mathbf{S}}^{(t)})+
\frac{1}{\mu}\big(\tilde{\mathbf{S}}^{(t+1)}-\tilde{\mathbf{S}}^{(t)}\big) \nonumber\\
&\quad +\eta\,\mathbf{V}^{(t+1)}=\mathbf{0}.
\label{eq:select_subgrad}
\end{align}
Take $t=t_k$ and let $k\to\infty$.
Using the continuity of $\nabla L_\epsilon$,
$\tilde{\mathbf{S}}^{(t_k)}\to \tilde{\mathbf{S}}^\star$,
and $\tilde{\mathbf{S}}^{(t_k+1)}-\tilde{\mathbf{S}}^{(t_k)}\to\mathbf{0}$,
we obtain
\begin{equation}
\nabla L_\epsilon(\tilde{\mathbf{S}}^\star)+\eta\,\mathbf{V}^\star=\mathbf{0}
\quad\text{for some}\quad
\mathbf{V}^\star\in\partial\|\cdot\|_{2,1}(\tilde{\mathbf{S}}^\star),
\end{equation}
where we used the closedness (outer semicontinuity) of the convex subdifferential mapping
$\partial\|\cdot\|_{2,1}$.
Hence,
\begin{align}
\mathbf{0}&\in \nabla L_\epsilon(\tilde{\mathbf{S}}^\star)+\eta\,\partial\|\cdot\|_{2,1}(\tilde{\mathbf{S}}^\star) \nonumber\\
&=\partial F_\epsilon(\tilde{\mathbf{S}}^\star),
\label{eq:critical_point_app}
\end{align}
which shows that every accumulation point is a critical point of $F_\epsilon$.
\hfill$\square$

\bibliographystyle{IEEEtran}
\bibliography{Lu1_complete_fixed}

\end{document}